\journal{Journal of Graph Theory}
\algnewcommand\algorithmicinput{\textbf{Input:}}
\algnewcommand\algorithmicoutput{\textbf{Output:}}
\algnewcommand\Input{\item[\algorithmicinput]}
\algnewcommand\Output{\item[\algorithmicoutput]}
\DeclareMathOperator{\lexlabel}{label}
\DeclareMathOperator{\lcm}{lcm}
\DeclareMathOperator{\LexCycle}{LexCycle}
\DeclareMathOperator{\LexBFS}{LexBFS}
\DeclareMathOperator{\diff}{diff}
\newtheorem{theorem}{Theorem}[section]
\newtheorem{proposition}[theorem]{Proposition}
\newtheorem{lemma}[theorem]{Lemma}
\newtheorem{corollary}[theorem]{Corollary}
\newtheorem{conjecture}[theorem]{Conjecture}
\newtheorem{definition}[theorem]{Definition}
\newtheorem{question}[theorem]{Question}
\newtheorem{property}[theorem]{Property}
\begin{document}
	\begin{frontmatter}

\title{A New Graph Parameter To Measure Linearity \tnoteref{t1}}
\tnotetext[t1]{Supported by the ANR-France project HOSIGRA (ANR-17-CE40-0022) and NSERC.}

\author[IRIF]{Pierre Charbit} 
\author[IRIF]{Michel Habib}
\author[Toronto]{Lalla Mouatadid} 
\author[IRIF]{Reza Naserasr}
\address[IRIF]{IRIF, CNRS  \&  Université Paris Cité, Paris, France. email: charbit, habib, reza, @irif.fr}
\address[Toronto]{Department of Computer Science, University of Toronto, Toronto, Ontario, Canada, email: lalla@cs.toronto.edu}

\begin{abstract}
Consider a sequence of $\LexBFS$ vertex orderings $\sigma_1, \sigma_2, \ldots$ where each ordering $\sigma_i$ is used to break ties for $\sigma_{i+1}$.   
Since the total number of vertex orderings of a finite graph is finite, this sequence must end in a cycle of vertex orderings. The possible length of this cycle is the main subject of this work. 
Intuitively, we prove for graphs with a known notion of linearity (e.g., interval graphs with their interval representation on the real line), this cycle cannot be too big, no matter which vertex ordering we start with.
More precisely, it was conjectured in~\cite{DH15} that for cocomparability graphs, the size of this cycle is always 2, independent of the starting order. Furthermore~\cite{Juraj} asked whether for arbitrary graphs, the size of such a cycle is always bounded by the asteroidal number of the graph. 
In this work, while we answer this latter question negatively, we provide support for the conjecture on cocomparability graphs by proving it for the subclass of domino-free cocomparability graphs. This subclass contains cographs, proper interval, interval, and cobipartite graphs. 
We also provide simpler independent proofs for each of these cases which lead to stronger results on this subclasses. 
\end{abstract}
\begin{keyword}
    Graph search, LexBFS, multisweep algorithms, asteroidal number, cocomparability graphs, interval graphs
\end{keyword}
\end{frontmatter}
\section{Introduction} 
\label{intro}
A \emph{graph search} or a \emph{graph traversal} is a mechanism to visit the vertices of a graph. 
Depth-First Search (DFS) and Breadth-First Search (BFS) are two classical and well studied examples of such traversals. 
If a graph search visits every vertex exactly once, then it produces a total ordering of the vertices of the graph corresponding to the order in which they are visited. The different searches can be therefore analyzed through the properties of the vertex orderings they produce.

Graph searches are often described by a criterion deciding, given an initial segment of the ordering, which vertex can be placed next. For instance, if we start a BFS at a vertex $v$, then all the neighbours of $v$ must be visited before the non-neighbours of $v$. Also, most of the times, there are so called tied vertices, i.e. several vertices that are simultaneously eligible to be placed next, and thus an arbitrary choice can be made. For example in BFS, once the root is chosen, the ordering in which its neighbours are visited can be arbitrary. 

Given a graph search, such as BFS, one can thus define a more precise graph search simply by defining tie-breaking rules, and this has proved to be a powerful technique to understand and analyze the structure of certain graph classes.
This line of work originally started in 1976 by Rose, Tarjan, and Lueker, when they introduced the lexicographic variant of BFS in~\cite{RoseTL76}, known as \emph{lexicographic breadth first search}, or $\LexBFS$ for short. 
One of the first uses of this graph search was the simplest linear time algorithm to recognize chordal graphs~\cite{RoseTL76}. 
Since then, $\LexBFS$ has led to a number of simple, efficient, and elegant algorithms on various graph classes~\cite{CorneilOS09,DH15,KratschMMS03}.

One way to break \emph{all} ties while constructing an ordering $\tau$ consists in using another ordering $\sigma$ : if there is a tie between two vertices $x$ and $y$, one shall pick the one that is the ``greatest" in $\sigma$. 
This was introduced by Simon in~\cite{simon1991} for LexBFS and is known as the  $^+$ rule. 
Given an order $\sigma$ on the vertices of $G$, Simon defines $\LexBFS^+(G,\sigma)$ as the (unique) LexBFS ordering of the vertices of $G$ obtained by breaking ties by always picking the right most vertex with respect to $\sigma$ (for instance, $\LexBFS^+(G, \sigma)$ starts with the last vertex of $\sigma$).
Now given an initial ordering $\sigma_0$ on the vertices of $G$, one can thus define a sequence $\sigma_0, \sigma_1, \sigma_2, \ldots$ of orderings on $V$ by setting $\sigma_{i}=\LexBFS^+(G, \sigma_{i-1})$. 
This technique is known as a \emph{multisweep algorithm} and has been used to introduce fast recognition algorithms for graph classes such as proper interval, interval, and cocomparability graphs~\cite{Corneil04,CorneilOS09,DH15}. 
The idea here is to prove some kind of convergence to say that this process will eventually yield some vertex ordering with strong structural properties. 
This technique is of course especially relevant for the study of graph classes which are defined, or characterized, by the existence of certain types of vertex orderings. 
For instance unit interval graphs are defined as intersection graphs of interval of length $1$ of the real line, but it is a classical theorem that they are exactly the graphs whose vertex set can be ordered such that  for any three vertices  $a,b,c$ with $a\prec b\prec c$, $ac\in E$ implies that  $ab\in E$ and $bc \in E$. 
In \cite{Corneil04} a very simple certifying recognition algorithm based on $\LexBFS^+$ is given : starting from any ordering, $3$ sweeps must provide such an order (which is easy to check) if the input graph is unit interval.

Evidently, as the number of distinct vertex orderings of a finite graph is finite, no matter which ordering $\sigma_0$ we start with, this sequence $\{\sigma_i\}_{i\ge1}$ of $\LexBFS^+$ orderings will eventually cycle. That is, for some $i$ and $k$, $\sigma_{i+k}=\sigma_i$.
For general graphs this observation raises two interesting questions :
\begin{enumerate}[(i)]
    \item Among all possible choices of $\sigma_0$ as a start ordering, how long does it take to reach a cycle?

    \item How large can this cycle be?
\end{enumerate}

This paper is concerned with these questions for the class of cocomparability graphs, a superclass of interval graphs characterized by the existence of a so called \emph{cocomparability ordering} : for any three vertices  $a,b,c$ with $a\prec b\prec c$, $ac\in E$ implies that  $ab\in E$ or $bc \in E$ (such an order is a transitive order - i.e. a linear extension of a transitive orientation - of the complement graph, hence the name of the class). 

One important reason for restricting our attention to cocomparability graphs is because Dusart and Habib proved the following theorem.
\begin{theorem}\label{thm:DusartHabib}\cite{DH15}
	If $G$ is a cocomparability graph on $n$ vertices, and $\sigma_0$ an arbitrary ordering of $V(G)$, define a sequence $\{\sigma_i \}_{i\ge 1}$ of $\LexBFS^+$ orderings of $G$ as $\sigma_{\color{black}i}=\LexBFS(G, \sigma_{i-1}) $. Then $\sigma_n$ is a cocomparability ordering of the vertices of $G$.
\end{theorem}

While this theorem guarantees for cocomparability graphs that a multisweep process will reach a cocomparability ordering in at most $n$ iterations, we don't know in general any non-trivial bound on when the cycle will be reached. 
For some subclasses of cocomparability graphs, we prove such bonds in this paper.

Regarding the second question above (ii), and again restricted to the class of cocomparability graphs, Dusart and Habib \cite{DH15} have conjectured that, no matter which initial ordering we start with, the length of the cycle is at most $2$ (a cycle of length $1$ being in fact impossible except for the one vertex graph, since the last vertex of an order is always the first vertex of the next order).
\begin{conjecture}\label{conj:DusartHabib}
    Given a cocomparability graph $G$, an arbitrary ordering $\sigma_0$ of $V(G)$, and a sequence $\{\sigma_i\}_{i\ge1}$ of $\LexBFS^+$ orderings of $G$ where $\sigma_i$ is used to break ties for $\sigma_{i+1}$, for $i$ {\color{black}sufficiently large}, we have $\sigma_{i}=\sigma_{i+2}$.	
\end{conjecture}
Observing that cocomparability graphs are asteroidal triple-free, and thus have asteroidal number two, Stacho asked if the length of all such cycles is bounded by the asteroidal number of the graph~\cite{Juraj} . 

In this work, we first answer Stacho's question negatively. 
Then, we provide strong support for the conjecture of Dusart and Habib by proving it for cocomparability graphs that do not contain a particular $6$ vertex graph (called domino) as an induced subgraph. 
While this subclass of cocomparability graphs contains proper interval graphs, interval graphs, cographs and cobipartite graphs, we additionally give for each of these cases an independent proof which provides stronger results, and sheds light into structural properties of these graph classes. 

The structure of the paper is as follows: we finish this introduction section by giving basic definitions and fixing our notations. 
In Section~\ref{sec:multisweep} we give all the necessary background to understand $\LexBFS$ properties and its use in multisweep algorithms. 
We also introduce, define, and discuss $\LexCycle(G)$, the main invariant studied in our paper. 
In particular, we give a construction that gives an answer to the question of Stacho mentioned earlier. 
In Section~\ref{sec:vocs}, we expose various results related to vertex ordering characterizations of the classes of graphs and the graph searches studied in the paper. 
Section~\ref{sec:classes} contains our main results mentioned in the previous paragraph about Conjecture \ref{conj:DusartHabib} in the subclass of domino-free cocomparability graphs. 
Finally in Section~\ref{conclusion} we present further ideas,  and research directions. 

\subsection{Notations}
A graph $G$ is a pair $(V, E)$ where $V$ is a finite set whose elements are called vertices, and $E$ is a set of unordered pairs of $V$ called edges.
We sometimes write $V(G)$ and $E(G)$ to denote the vertices and the edges of a graph $G$. If no ambiguity occurs, we will always use the letters $n$ and $m$ to denote respectively the number of vertices and edges of a graph $G$.
Given a pair of adjacent vertices $u$ and $v$, we write $uv$ to denote the edge in $E$ with endpoints $u$ and $v$. 
We denote by $N(v) = \{u : uv \in E\}$ the open neighbourhood of vertex $v$, and $N[v] = N(v) \cup \{v\}$ the closed neighbourhood of $v$. 
We write $G[V']$ to denote the \emph{induced subgraph} $(V',E')$ of $G=(V,E)$ on the subset $V'$ of $V$, where for every pair $u,v \in V', uv \in E'$ if and only if $uv \in E$. 
{\color{black}A graph class $\cal G$ is said to be \emph{hereditary} if it is closed under induced subgraphs.} 
The complement of a graph $G=(V,E)$ is the graph $\overline{G}(V, \overline{E})$ where $uv \in \overline{E}$ if and only if $uv \notin E$.  
A \emph{private neighbour} of a vertex $u$ with respect to a vertex $v$ is a third vertex $w$ that is adjacent to $u$ but not $v$: $uw \in E, vw \notin E$.

A set $S \subseteq V$ is an independent set if for all $a,b \in S, ab \notin E$, and is a clique set if for all $a,b \in S, ab \in E$. 
Given a pair of vertices $u$ and $v$, the distance between $u$ and $v$, denoted $d(u,v)$, is the length of a shortest $u,v$ path. A \emph{diametral} path of a graph is a shortest $u,v$ path where $u$ and $v$ are at the maximum distance among all pairs of vertices. 
A \emph{dominating} path in a graph is a path where all the vertices of the graph are either on the path or have a neighbour on the path.
A triple of independent vertices $u,v,w$ forms an \emph{asteroidal triple} (AT) if every pair of the triple remains connected when the third vertex and its closed neighbourhood are removed from the graph. In general, a set $A$ of vertices of $G$ forms an \emph{asteroidal set} if for 
each vertex $a \in A$, the set $A\backslash\{a\}$ is contained in one connected component of 
$G[V\backslash N[a]]$. The maximum cardinality of an asteroidal set of $G$, denoted $an(G)$, is called the \emph{asteroidal number} of $G$. 
A graph is \emph{AT-free} if it does not contain an asteroidal triple. 
The class of AT-free graphs contains cocomparability graphs.
A \emph{domino} (Fig.~\ref{fig:domino}) is the induced graph $G=(V = \{a,b,c,d,e,f\}, E = \{ab, ac, bd, cd, ce, df, ef\})$.
\begin{figure}[H]
	\centering
	\begin{tikzpicture}[scale=.3]
	    \node[circle, draw, fill=black!100, inner sep=1pt, minimum width=4pt,label=above:$a$] (a) at (0,0) {};
	    \node[circle, draw, fill=black!100, inner sep=1pt, minimum width=4pt,label=below:$b$] (b) at (0,-4) {};
	    \node[circle, draw, fill=black!100, inner sep=1pt, minimum width=4pt,label=above:$c$] (c) at (4,0) {};
	    \node[circle, draw, fill=black!100, inner sep=1pt, minimum width=4pt,label=below:$d$] (d) at (4,-4) {};
	    \node[circle, draw, fill=black!100, inner sep=1pt, minimum width=4pt,label=above:$e$] (e) at (8,0) {};
	    \node[circle, draw, fill=black!100, inner sep=1pt, minimum width=4pt,label=below:$f$] (f) at (8,-4) {};
	    \foreach \from/\to in {a/b,a/c,b/d,c/d,c/e,d/f,e/f} \draw (\from) -- (\to);
	\end{tikzpicture}
	\caption{Domino}\label{fig:domino}
\end{figure}

Let $[k]$ denote the set of integers $1$ to $k$. 
Given a graph $G=(V, E)$, an \emph{ordering} $\sigma$ of $G$ is a bijection $\sigma: V \leftrightarrow 
[n]$. 
For $v \in V$, $\sigma(v)$ refers to the position of $v$ in $\sigma$. 
For a pair $u, v$ of vertices  we write $u \prec_{\sigma} v$ if and only if $\sigma(u) < 
\sigma(v)$; we also say that $u$ (resp. $v$) \emph{is to the left of} (resp. \emph{right of}) $v$ (resp. $u$). 
We write $\{\sigma_i\}_{i\ge 1}$ to denote a sequence of orderings $\sigma_1, \sigma_2, \ldots$. 
We also write $\sigma_{i>1}$ to denote an ordering $\sigma_i$ where $i > 1$. 

Given a sequence  of orderings $\{\sigma_i\}_{i\ge 1}$ of a graph $G$, and an edge $ab \in E$, we write $a \prec_i b$ if $a \prec_{\sigma_i} b$, and $a \prec_{i,j} b$ if $a \prec_i b$ and $a \prec_j b$. 
Given an ordering $\sigma = v_1, v_2, \ldots, v_n$ of $G$, we write $\sigma^d$ to denote the \emph{dual} (also called \emph{reverse}) ordering of $\sigma$; that is $\sigma^d = v_n, v_{n-1}, \ldots, v_2, v_1$. 
For an ordering $\sigma = v_1, v_2, \ldots, v_n$, the interval $\sigma[v_s,\ldots,v_t]$ denotes the ordering of $\sigma$ restricted to the vertices $\{v_s,v_{s+1},\ldots,v_t\}$ as numbered by $\sigma$. Similarly, if $S \subseteq V$, and $\sigma$ an ordering of $V$, we write $\sigma[S]$ to denote the ordering of $\sigma$ restricted to the vertices of $S$. 

\section{LexBFS, multisweep Algorithms and LexCycle}
\label{sec:multisweep}
A \emph{multisweep algorithm} is an algorithm that computes a sequence of orderings where each ordering $\sigma_{i}$ uses the previous ordering $\sigma_{i-1}$ to break ties using some predefined tie-breaking rules. 
We focus on one specific tie-breaking rule: \textbf{the $^+$ rule}, formally defined as follows: Given a graph $G=(V,E)$, an ordering $\sigma$ of $G$, and a graph search $S$ (such as LexBFS), $S^+(G, \sigma)$ is a new ordering $\tau$ of $G$ that uses $\sigma$ to break any remaining ties from the $S$ search. 
In particular, given a set $T$ of tied vertices, the $^+$ rule chooses the vertex in $T$ that is rightmost in $\sigma$. 
We sometimes write $\tau = S^+(\sigma)$ instead of $\tau = S^+(G, \sigma)$ if there is no ambiguity on the graph considered.

In this work, we focus on LexBFS based multisweep algorithms. 
LexBFS is a variant of BFS that assigns lexicographic labels to vertices, and breaks ties between them by choosing vertices with lexicographically highest labels. 
The labels are words over the alphabet {\color{black}$\{1,...,n\}$}. 
 We denote by $\lexlabel(v)$ the label of a vertex $v$.
By convention $\epsilon$ denotes the  empty word. 
LexBFS was initially introduced by Rose, Tarjan, and Lueker to recognize chordal graphs~\cite{RoseTL76}. 
We present LexBFS in Algorithm~\ref{lexbfsalg} below. 
The operation \emph{append$(n-i)$} in Algorithm \ref{lexbfsalg}, puts the letter $n-i$ at the end of the word. 
\begin{algorithm}[H]
	\caption{LexBFS}\label{lexbfsalg}
	\begin{algorithmic}[1]
		\Input A graph $G=(V,E)$ and a start vertex $s$
		\Output An ordering $\sigma$ of $V$
		\State assign the label $\epsilon$ to all vertices, and $\lexlabel(s) \leftarrow {\color{black}\{n\}}$
		\For{$i \leftarrow 1$ to $n$}
		\State pick an unnumbered vertex $v$ with lexicographically largest label
		\State $\sigma(v) \leftarrow i$ \Comment{$v$ is assigned the number $i$}
		\ForAll{unnumbered vertex $w$ adjacent to $v$}
		
		\State append$(n-i)$ to $\lexlabel(w)$
		\EndFor
		\EndFor
	\end{algorithmic}
\end{algorithm}

Starting from an ordering $\sigma_0$ of $G$, a multisweep $\LexBFS^+$ process consists of computing the following sequence:
$\sigma_{i+1}=\LexBFS^+(G, \sigma_i)$. 
Since $G$ has a finite number of LexBFS orderings, such a sequence must get into a finite cycle of vertex orderings. This leads to the definition below, notice that there is no assumption on the starting vertex ordering $\sigma_0$. 
\begin{definition}[LexCycle]\label{definitionOfLexCycle}
	For a graph $G=(V,E)$, let $\LexCycle(G)$ be the \textbf{maximum} length of a cycle of vertex orderings obtained via a sequence of LexBFS$^+$ sweeps.
\end{definition}

Note that contrary to other classical invariants, it is not at all clear whether this should be a monotone function for the induced subgraph relation. 
The following question is still open, even for cocomparability graphs.

\begin{question}
If $H$ is an induced subgraph of $G$, is it true that $\LexCycle(H)$ is at most $\LexCycle(G)$?
\end{question}

Another viewpoint on $\LexCycle(G)$ is obtained by constructing a directed graph $G_{lex}$ whose vertices are all $\LexBFS$ orderings of $G$, and with an arc from $\sigma$ to $\tau$ if $\LexBFS^+(G, \sigma) = \tau$. 
The digraph $G_{lex}$ is a functional digraph : every vertex has an out-degree of exactly one, and therefore every connected component of $G_{lex}$ is a circuit on which are planted some directed trees. 
For instance, if $K$ is a clique, $K_{lex}$ is just the union of directed circuits of size two joining one permutation to its reverse. 
$\LexCycle(G)$ is then just the maximum size of a directed circuit in $G_{lex}$, and we do not know of any example of a graph with two distinct cycle lengths.\\

In this work, we study the first properties of this new graph invariant, $\LexCycle$. 
Due to the nature of the $^+$ rule, $\LexCycle(G) \geq 2$ as soon as $G$ contains more than one vertex (the last vertex of an order is the first vertex of the next one).
Obviously $\LexCycle(G) \leq n!$, and more precisely $\LexCycle(G)$ is bounded by the number of LexBFS orderings of $G$.
We introduce a construction, \emph{Starjoin}, below which suggests (but does not yet prove) that solely based on the number of vertices and without the use of the structural constraints on the graph, we cannot bound $\LexCycle(G)$ by a polynomial on $n$. 
This construction will allow us, at the end of this section, to answer the question of Stacho~\cite{Juraj} mentioned in the introduction, which asks if  $\LexCycle(G) \leq an(G)$ for any graph. 

We start by constructing some graphs with $\LexCycle\ge 3$ :
\begin{itemize}
    \item $G_3$ is the graph represented on Figure~\ref{G3}. It satisfies $\LexCycle(G_3) \geq 3 = an(G_3)$, as shown by the multisweep starting with $\sigma_1 = x, b, a, c, e, f, d, z, y$. 

    \item $G_4$ is the graph represented on Figure~\ref{G4}. It satisfies $\LexCycle(G_4) \geq 4 = an(G_4)$, as shown by the multisweep starting with  $\mu_1 = \LexBFS(G) = x_4, z_4, y_1, y_3, y_4, y_2, z_2, z_1, z_3, x_2, x_3, x_1$. 
\end{itemize}
\begin{figure}[H]
	\begin{minipage}{.4\textwidth}
		\centering
		\begin{tikzpicture}[auto=left,every node/.style={circle,inner sep=1pt, minimum width=4pt,draw,fill=black!100}]
		\node[label=below:$f$] (f) at (0.5, 1.5) {};
		\node[label=below:$y$] (y) at (2, 1.5) {};
		\node[label=below:$x$] (x) at (-1.5-0.25, 0.5) {};
		\node[label=below:$a$] (a) at (-0.5+0.25, 0.5) {};
		\node[label=below:$e$] (e) at (0.5+0.75, 0.5) {};
		\node[label=above:$b$] (b) at (-1, -0.5) {};
		\node[label=above:$c$] (c) at (0+0.5, -0.5) {};
		\node[label=above:$d$] (d) at (2, -0.5) {};
		\node[label=above:$z$] (z) at (1.25, -1.5) {};
		\foreach \from/\to in {f/y, f/a, f/e, y/e, a/e, a/x, a/b, a/c, e/c, x/b, b/c, c/d, e/d, c/d, c/z, d/z}
		\draw (\from) -- (\to);
		\end{tikzpicture}
	\end{minipage}
	\begin{minipage}{0pt}
		\centering
		\begin{align*}
		&\text{LexBFS}(G) &= \sigma_1 &=  x, b, a, c, e, f, d, z, y\\
		&\text{LexBFS}^+(\sigma_1) &= \sigma_2 &= y, f, e, a, c, d, b, x, z\\
		&\text{LexBFS}^+(\sigma_2) &= \sigma_3 &= z, d, c, e, a, b, f, y, x\\
		&\text{LexBFS}^+(\sigma_3) &= \sigma_1 &= x, b, a, c, e, f, d, z, y
		\end{align*}
	\end{minipage}
	\caption{Example of a graph with $\LexCycle(G_3) \ge 3$ where the 3-cycle consists of $C_3=[\sigma_1, \sigma_2, \sigma_3]$.}\label{G3}
\end{figure}
\begin{figure}[H]
	\begin{minipage}{.4\textwidth}
		\centering
		\begin{tikzpicture}[auto=left,every node/.style={circle,inner sep=1pt, minimum width=4pt,draw,fill=black!100}]
		\node[label=above:$z_1$] (f) at (0.5, 1.5) {};
		\node[label=above:$x_1$] (y) at (2, 1.5) {};
		\node[label=above:$x_4$] (x) at (-1.5-0.25, 1) {};
		\node[label=above:$y_1$] (a) at (-0.5+0.25, 0.5) {};
		\node[label=above:$y_2$] (e) at (0.5+0.75, 0.5) {};
		\node[label=left:$y_4$] (b) at (-0.5+0.25, -0.5) {};
		\node[label=below:$y_3$] (c) at (0.75+0.5, -0.5) {};
		\node[label=above:$z_2$] (d) at (2.5, 0) {};
		\node[label=right:$x_2$] (z) at (2, -1.25) {};
		\node[label=below:$x_3$] (u) at (-1, -1.50) {};
		\node[label=below:$z_3$] (v) at (0.5, -1.25) {};
		\node[label=left:$z_4$] (w) at (-1.5, 0) {};
		\foreach \from/\to in {f/y, f/a, f/e, y/e, a/e, a/x, a/b, a/c, e/c, x/w, b/c, c/d, e/d, c/d, c/z, d/z, e/b, u/v, v/c, v/b, u/b, w/b, w/a, f/b,a/d,e/v,w/c}
		\draw (\from) -- (\to);
		\end{tikzpicture}
	\end{minipage}
	\begin{minipage}{0pt}
		\centering
		\begin{align*}
		& & \mu_1 &= x_4, z_4, y_1, y_3, y_4, y_2, z_2, z_1, z_3, x_2, x_3, x_1\\
		&\mu_1^+ &= \mu_2&= x_1, z_1, y_2, y_4, y_1, y_3, z_3, z_2, z_4, x_3, x_4, x_2\\
		&\mu_2^+ &= \mu_3 &= x_2, z_2, y_3, y_1, y_2, y_4, z_4, z_3, z_1, x_4, x_1, x_3\\
		&\mu_3^+ &=\mu_4&=  x_3, z_3, y_4, y_2, y_3, y_1, z_1, z_4, z_2, x_1, x_2, x_4 \\
		&\mu_4^+ &= \mu_1&= x_4, z_4, y_1, y_3, y_4, y_2, z_2, z_1, z_3, x_2, x_3, x_1\\
		\end{align*} 
	\end{minipage}
	\caption{Example of a graph with $\LexCycle(G_4) \ge 4$ where the 4-cycle consists of $C_4=[\mu_1, \mu_2, \mu_3, \mu_4]$.}\label{G4}
\end{figure}

We now show how one can construct graphs with $\LexCycle(G) > an(G)$. 
Consider the following graph operation that we call \emph{Starjoin}.
\begin{definition}[Starjoin]
	For a family of vertex disjoint connected graphs $\{G_i\}_{1\leq i \leq k}$, we define $H=Starjoin(G_1, \dots G_k)$ as follows:
	For $i \in [k]$, add a universal vertex $g_i$ to $G_i$, then add a root vertex $r$ adjacent to all $g_i$'s. 
\end{definition}

\begin{proposition}\label{starjoin}
	Let $G_i$ be a graph with a cycle $C_i$ in a sequence of $\LexBFS^+$ orderings of  $G_i$ and let $H=Starjoin(G_1, \dots G_k)$. We have 
	\begin{itemize}
		\item $an(H)=\max \{k, an(G_1), an(G_2), \ldots, an(G_k)  \}$
		\item$\LexCycle(H) \geq \lcm_{1 \leq i \leq k}\{ |C_i|\}$, where $\lcm$ stands for the least common multiple.
	\end{itemize}
\end{proposition}
\begin{proof}
	Notice first that selecting one vertex per $G_i$ would create a $k$-asteroidal set. 
	Since every $g_i$ vertex is universal to $G_i$, we can easily see that every asteroidal set of $H$ is either restricted to one $G_i$, or it contains at most one vertex per $G_i$. This yields the first formula. 
	
	For the second property, we notice first that a cycle of $\LexBFS^+$ orderings is completely determined by its initial $\LexBFS$ ordering, since all ties are resolved using the $^+$ rule.  
	For $1 \le i \le k$, let $\sigma_1^i$ denote the first $\LexBFS^+$ ordering on $C_i$, the cycle in a sequence of $\LexBFS^+$ orderings of $G_i$. 
	
	Consider the following $\LexBFS$ ordering of $H$: $\sigma_1^H= r, g_1, \dots g_k \sigma_1^1,   \dots \sigma_1^k$.
	{\color{black} Consider the cycle of $\LexBFS^+$ orderings that will result after running a sequence of $\LexBFS^+$, starting with $\sigma_1^H$ as its first ordering.}
	Notice that in any $\LexBFS^+$ ordering {\color{black}in this cycle,} the vertices of $G_i$ are consecutive, with the exception of $g_i$ that can appear in between $G_i$'s vertices. 
	Furthermore $\sigma_j^H[G_i]= \LexBFS^+(G_i, \sigma_{j-1}^i)$.
	Therefore if we take $\sigma_1^i$ as the first $\LexBFS^+$ ordering of $C_i$, then the length of the cycle generated by $\sigma_1^H$ is necessarily a multiple of $|C_i|$. 
\end{proof}

We are now ready to answer Stacho's conjecture negatively.

\begin{corollary}
	There exists a graph $G$ satisfying $LexCycle(G) > an(G)$.
\end{corollary}
\begin{proof}
	To see this, consider $H=Starjoin(G_3, G_4)$ constructed using the graphs in Figures~\ref{G3} and~\ref{G4}. By Proposition~\ref{starjoin},  $an(H)={4}$ and $\LexCycle(H)\geq 12$. 
\end{proof}

A natural question to raise here is whether $\LexCycle$ can be bounded by some function of the asteroidal number. 
In order to disprove this fact, it would be enough by Proposition~\ref{starjoin} to generalize the constructions of $G_3$ and $G_4$ to graphs with bounded asteroidal number but arbitrarily large prime $\LexCycle$ values. 
We do not have such a generalization yet. 
\section{Vertex Ordering Characterizations of Classes and Searches}
\label{sec:vocs}
Given a graph class $\mathcal{G}$, a \emph{vertex ordering characterization} (or VOC) of $\mathcal{G}$ is a characterization of a graph class given by the existence of a total ordering on the vertices with specific properties.  
VOCs have led to a number of efficient algorithms, and are often the basis of various graph recognition algorithms, see for instance~\cite{RoseTL76, corneil2013ldfs, CorneilOS09, kohler2016linear, habib2017maximum}. 
In this section, we describe some of these VOCs for the graph classes for which we will prove the validity of Conjecture \ref{conj:DusartHabib} in the Section \ref{sec:classes}

A graph $G=(V,E)$ is an \emph{interval graph} if there exists a collection of intervals $(I_v)_{v\in V}$ such that $uv\in E$ if and only if the intervals $I_v$ and $I_u$ have non empty intersection. 
Given $G$, such a collection of intervals is not unique and is called an {\em interval representation} of $G$. 
Given an interval representation $\cal R$, one can canonically obtain two orderings of the vertices of $G$: a \emph{left endpoint} ordering of $\cal{R}$ is an ordering of the intervals by increasing value of their left endpoint, and a \emph{right endpoint} ordering of $\cal{R}$ is the ordering of the intervals by decreasing value of  their right endpoint. {\color{black} If some intervals have identical left or right endpoint this can be ambiguous, so more precisely a left (resp. right) endpoint ordering of a collection of intervals $([l(v),r(v)])_{v\in V}$ is any ordering $\prec$ of $V$ such that for all $u,v \in V$, $u\prec v$ implies $l(u)\leq l(v)$ (resp $r(u)\geq r(v)$).}  
It is easy to see that any of these orderings satisfy the following VOC that is in fact a characterization of interval graphs: a graph $G$ is an interval graph if and only if there exists an \emph{I-ordering}, that is an ordering $\sigma$ of $G$ such that :
$$\text{for every triple $a \prec_{\sigma} b \prec_{\sigma} c$, if $ac \in E$ then $ab \in E$}$$
It is a characterization of interval graphs since one can indeed prove that any $I$-ordering is a left endpoint ordering of some interval representation $\cal R$ of $G$. 

An interval graph is a \emph{proper interval graph} if no interval in the interval representation is fully contained in another interval. 
Proper interval graphs were shown in ~\cite{robertsindifference} to be precisely the interval graphs that admit a representation where all the intervals have unit length, and are therefore also called \emph{unit interval graphs}. 
They are also characterized by the following VOC : $G=(V,E)$ is a proper interval graph if and only if $V$ admits a \emph{PI-ordering} :  an ordering $\sigma$ such that 
$$\text{for every triple $a \prec_{\sigma} b \prec_{\sigma} c$, if $ac \in E$ then $ab\in E$ and $bc \in E$.}$$
This VOC follows from the fact that in proper interval graphs, left endpoint and right endpoint orderings are the same. 

A \emph{comparability graph} is a graph $G=(V,E)$ that admits a transitive orientation of its edges. 
That is, there exists an orientation on $E(G)$, where for any triple of vertices $x,y,z$, if $xy, yz \in E(G)$ are oriented $x \rightarrow y$ and $y \rightarrow z$, then the edge $xz$ must exist and is oriented $x \rightarrow z$. 
This transitivity can be captured in a vertex ordering of $V(G)$ known as a \emph{comparability ordering} or a transitive order.
In particular, a transitive order is an ordering $\sigma$ of the vertices of $G$ where if $x \prec_{\sigma} y \prec_{\sigma} z$ and $xy, yz \in E$, then $xz \in E$. 
A \emph{cocomparability graph} is the complement of a comparability graph. 
This definition thus translates into a VOC : a graph $G=(V,E)$ is a cocomparability graph if $V$ admits a so called \emph{cocomparability ordering} (see \cite{KrSt93}), that is an ordering $\sigma$ of $V$ such that
$$\text{for any triple $a \prec_{\sigma} b \prec_{\sigma} c$, if $ac \in E$ then $ab \in E$ or $bc \in E$}$$
For a graph $G$ with an order $\sigma$ on its vertices a triple $a \prec_{\sigma} b \prec_{\sigma} c$ with $ac \in E$, $ab \notin E$ and $bc \notin E$ is called an {\em umbrella}, which is why cocomparability orderings are sometimes called \emph{umbrella-free orderings}.

One can easily see from these vertex orderings that:
\begin{align*}
    \text{Proper Interval} \subsetneq \text{ Interval} \subsetneq \text{Cocomparability}
\end{align*}
It is moreover proven in \cite{gilmore1964characterization} that interval graphs are chordal (no induced cycle of length at least $4$), and even more : they are exactly the $C_4$-free cocomparability graphs. 

Also, it is proved in~\cite{golumbic1984tolerance} that the class of cocomparability graphs are asteroidal triple-free, thus all these graphs have asteroidal number at most two.

Other graph classes we consider in this paper are \emph{domino-free} cocomparability graphs (cocomparability graphs that do not contain the domino as in induced subgraph) and \emph{cobipartite} graphs (the complements of bipartite graphs). 
Since a domino contain a $C_4$ and since interval graphs do not, interval graphs are domino-free. Similarly, a domino contains a independent set of size $3$, so cobipartite graphs form also a subclass of domino-free comparability graph. All inclusions are represented on Figure \ref{fig:classes}.\\

\begin{figure}[htbp]
\begin{center}
    \includegraphics{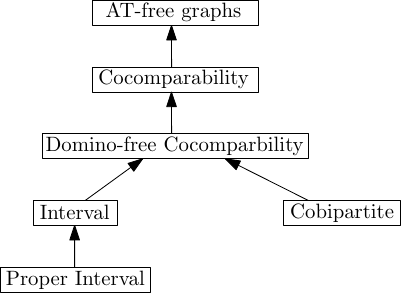}
\caption{Graph classes studied in this article}
\label{fig:classes}
\end{center}
\end{figure}

Vertex orderings produced by searches can also be characterized by vertex orderings (see \cite{CK08} for such results). 
$\LexBFS$ in particular has the following VOC, known as the {\em $\LexBFS$ four point condition}.

 \begin{theorem}\cite{dragan1996lexbfs}(LexBFS 4PC)
 	\label{LexBFScond}
 	Let $G=(V,E)$ be an arbitrary graph. An ordering $\sigma$ is a LexBFS ordering of $G$ if and only if for every triple $a \prec_{\sigma} b \prec_{\sigma} c$, if $ac \in E, ab \notin E$, then there exists a vertex $d$ such that $ d \prec_{\sigma} a$ and $db \in E, dc \notin E$. 
 \end{theorem}
 
 We call the triple $a, b, c$ as described in Theorem \ref{LexBFScond} above a \emph{bad triple}.  
 Observe that the vertex $d$ here is private neighbour of $b$ with respect to $c$. 
 When choosing vertex $d$ as described above, we often choose it as the \emph{left most private neighbour} of $b$ with respect to $c$ in $\sigma$ and write $d = \text{LMPN}(b|_{\sigma} c)$. 
 This is to say that prior to visiting vertex $d$ in $\sigma$, vertices $b$ and $c$ were tied {\color{black} : every vertex before $d$ in $\sigma$ is either a common neighbour or a common non-neighbour of $b$ and $c$} (or equivalently $\lexlabel(b) = \lexlabel(c)$ as assigned by Algorithm~\ref{lexbfsalg}), and vertex $d$ caused $b \prec_{\sigma} c$. 

Combining VOCs for graph classes with the $\LexBFS$ 4PC has already led to a number of structural results~\cite{corneil2013ldfs, kohler2014linear, BigArt}. 
Here we focus on $\LexBFS$ properties on cocomparability graphs. 
In this case, the 4PC can be refined with a stronger statement that we call $C_4$ property.

\begin{property}[The LexBFS $C_4$ Property]
	\label{c4cocomp}
	Let $G=(V, E)$ be a cocomparability graph and $\sigma$ a LexBFS cocomparability order of $V$. If $\sigma$ has a bad LexBFS triple $a \prec_{\sigma} b \prec_{\sigma} c$, then there exists a vertex $d$ such that $d \prec_{\sigma} a$ and $G$ has an induced $C_4 = d, a, b, c$ where $da, db, ac, bc \in E$. 
\end{property}
\begin{proof}
	To see this, it suffices to use the LexBFS 4PC and the cocomparability VOC properties. 
	Since $\sigma$ is a cocomparability ordering, and $ab \notin E$ then $bc \in E$. 
	Then, using the LexBFS 4PC, there must exist a vertex $d \prec a$ such that $db \in E, dc \notin E$. 
	Once again since $d \prec a \prec b$ and $db \in E, ab \notin E$, it follows that $da \in E$ otherwise we contradict $\sigma$ being a cocomparability ordering. 
\end{proof}

We add here another lemma with a flavour similar to the 4PC property, that we will use very often when studying $\LexBFS^+$ multisweep sequences. 
Note that it is true for any graph.

\begin{lemma}
    \label{lem:sameorder}
    Let $G$ be a graph with an ordering $\sigma$ of its vertices and let $\tau=\LexBFS^+(G,\sigma)$. 
    If $a$ and $b$ are vertices such that $a\prec_\sigma b$ and $a\prec_\tau b$, then there exists a vertex $c$ with $c\prec_\tau a$ such that $ca\in E$ and $cb\notin E$. 
    Furthermore, if $c$ is the leftmost vertex for this property (i.e. $c=\text{LMPN}(a|_{\tau} b)$), then every vertex that precedes $c$ in $\tau$ is either adjacent to both $a$ and $b$ or to none of them.
\end{lemma}
\begin{proof}
    This is just the consequence of the $^+$ rule: if $a$ precedes $b$ in both orderings, then it means $a$ and $b$ were not tied when $a$ was picked during the construction of $\tau$, and therefore the label of $a$ was strictly larger than the one of $b$, which exactly translates into the conclusion of the Lemma.
\end{proof}

A consequence of the previous lemma is a result from \cite{BigArt} known as the \emph{Flipping Lemma}, that gives an intuition as to why Conjecture~\ref{conj:DusartHabib} could be true.
\begin{lemma}[The Flipping Lemma,\cite{BigArt}]\label{flippinglemma}
	Let $G=(V,E)$ be a cocomparability graph, $\sigma$ a cocomparability ordering of $G$ and $\tau = \text{LexBFS}^+(\sigma)$. 
	For every pair $u,v$ such that $uv \notin E$, $u \prec_{\sigma} v$ if and only if $v \prec_{\tau} u$.
\end{lemma}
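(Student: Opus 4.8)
The plan is to prove the forward implication $u \prec_{\sigma} v \Rightarrow v \prec_{\tau} u$ for every non-edge $uv$, and recover the full equivalence by symmetry: if $v \prec_{\tau} u$ held together with $v \prec_{\sigma} u$, then applying the forward implication to the pair $(v,u)$ would give $u \prec_{\tau} v$, a contradiction. So it suffices to rule out a \emph{bad pair}, meaning a non-adjacent pair with $u \prec_{\sigma} v$ and $u \prec_{\tau} v$ (the order failed to flip). I would run an extremal argument: assuming a bad pair exists, choose among all bad pairs one $(u,v)$ for which the left endpoint $u$ is as early as possible in $\tau$ (note $u$ is the $\tau$-left endpoint, since $u \prec_{\tau} v$); that is, minimize $\tau(u)$.

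Next I would analyze the step of the $\LexBFS^+$ run producing $\tau$ at which $u$ is numbered. Since $u \prec_{\tau} v$, vertex $v$ is still unnumbered at that step, and because $uv \notin E$ the two vertices began with identical labels, so they are genuine competitors. As $u$ is selected while $u \prec_{\sigma} v$ — and the $^+$ rule breaks ties in favour of the vertex that is \emph{rightmost} in $\sigma$, which here is $v$ — $u$ cannot have been tied with $v$, so its label must be strictly lexicographically larger than that of $v$. By the structure of the labels assigned in Algorithm~\ref{lexbfsalg} (each numbered neighbour appends a strictly decreasing letter), a strictly larger label for $u$ forces a vertex $w$ numbered before $u$, i.e. $w \prec_{\tau} u$, with $wu \in E$ and $wv \notin E$; concretely one may take $w$ to be the earliest vertex in $\tau$ adjacent to exactly one of $u,v$, which the lexicographic comparison pins down as a neighbour of $u$ and a non-neighbour of $v$.

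Finally I would locate $w$ inside $\sigma$ using only that $\sigma$ is a cocomparability ordering. We have $wu \in E$, $uv \notin E$, $wv \notin E$ and $u \prec_{\sigma} v$. Were $v \prec_{\sigma} w$, the triple $u \prec_{\sigma} v \prec_{\sigma} w$ would have $uw \in E$ but neither $uv \in E$ nor $vw \in E$, violating the cocomparability vertex-ordering condition; hence $w \prec_{\sigma} v$. But now $(w,v)$ is itself a non-adjacent pair with $w \prec_{\sigma} v$ and, since $w \prec_{\tau} u \prec_{\tau} v$, also $w \prec_{\tau} v$ — another bad pair, whose $\tau$-left endpoint $w$ satisfies $\tau(w) < \tau(u)$. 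This contradicts the minimality of $(u,v)$, so no bad pair exists and the lemma follows. I expect the main obstacle to be the middle step: turning the observation ``$u$ was chosen although the $^+$ rule favoured $v$'' into the clean combinatorial witness $w \prec_{\tau} u$ with $wu \in E$, $wv \notin E$, i.e. reading a private neighbour of $u$ against $v$ off the lexicographic label inequality. Once that witness is extracted, the cocomparability condition and the extremal choice close the argument at once; notably the proof never invokes Theorem~\ref{maintaincocomp}, which keeps it free of circularity, since the Flipping Lemma is itself used to establish that theorem.
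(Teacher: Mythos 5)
Your proof is correct. Note that the paper itself does not prove the Flipping Lemma --- it defers to \cite{BigArt} --- so there is no in-paper argument to compare against; your reconstruction is essentially the standard one from that reference: pick a non-flipped non-edge $(u,v)$ minimizing $\tau(u)$, observe that $u$'s label must be strictly lexicographically larger than $v$'s at the moment $u$ is numbered (otherwise the $^+$ rule would have selected $v$), read off from the first label difference a witness $w \prec_{\tau} u$ with $wu \in E$, $wv \notin E$, use the cocomparability (umbrella) condition on $\sigma$ to force $w \prec_{\sigma} v$, and conclude that $(w,v)$ is an earlier counterexample, contradicting minimality. All the delicate points check out, in particular the extraction of $w$ from the lexicographic comparison (the appended letters $n-i$ are strictly decreasing, so the earliest numbered vertex adjacent to exactly one of $u,v$ determines the comparison) and the derivation of the backward implication by applying the forward one to the swapped pair; your closing remark about non-circularity is also apt, since the Flipping Lemma is indeed a tool for proving Theorem~\ref{maintaincocomp} rather than a consequence of it.
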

\begin{proof}
    Assume by contradiction that there exists vertices $u$ and $v$ such that $u \prec_{\sigma} v$ and $u \prec_{\tau} v$, and choose such a pair with the left most possible element $u$ with respect to $\tau$. By Lemma \ref{lem:sameorder}, there exists a vertex $w$ such that $w\prec_{\tau} u$, $wu\in E$ and $wv \notin E$. 
    Because of the choice of the pair $(u,v)$, we must have $v\prec_{\sigma} w$, but now the triple $(u,v,w)$ forms an umbrella in $\sigma$, which contradicts the fact that $\sigma$ is a cocomparability order on $G$.
\end{proof}

Given that a comparability ordering is an umbrella-free ordering, the Flipping Lemma directly implies the following result of \cite{BigArt}, which states that LexBFS$^+$ sweeps preserve cocomparability orderings.

\begin{theorem}\cite{BigArt}
\label{maintaincocomp}
	Let $\sigma$ be a cocomparability ordering of $G=(V,E)$. The 
	ordering $\tau = \LexBFS^+(\sigma)$ is a cocomparability ordering of $G$. 
\end{theorem}

Another easy consequence of the Flipping Lemma is the following corollary. 

\begin{corollary}
\label{evenlength}
	For a non-trivial cocomparability graph $G$ (i.e. $|V(G)| \ge 2$), $\LexCycle(G)$ is necessarily even.
\end{corollary}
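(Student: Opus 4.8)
The plan is to combine the Flipping Lemma with a one-line parity argument, after first checking that every ordering appearing in a cycle is genuinely a cocomparability ordering, so that the Flipping Lemma is legitimately applicable at each step of the cycle.

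First I would establish that any cycle $\sigma_1, \sigma_2, \ldots, \sigma_k$ (indices mod $k$, with $\sigma_{i+1} = \LexBFS^+(\sigma_i)$) consists entirely of cocomparability orderings. Since the $^+$ rule makes $\LexBFS^+$ a deterministic map on orderings, each $\sigma_i$ in the cycle is a periodic point, so $\sigma_i = (\LexBFS^+)^{mk}(\sigma_i)$ for every $m \ge 1$. By Theorem~\ref{insidecocomp}, $\mathcal{O}(n)$ applications of $\LexBFS^+$ always produce a cocomparability ordering; choosing $m$ so that $mk$ exceeds this bound shows that $\sigma_i$ itself is a cocomparability ordering. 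Equivalently, the transient part of any orbit terminates at a cocomparability ordering by Theorem~\ref{insidecocomp}, and Theorem~\ref{maintaincocomp} then guarantees that the orbit never leaves the set of cocomparability orderings, so the eventual cycle lies entirely inside it.

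Next, assuming $G$ is not complete, I would fix a single non-edge $\{u,v\}$ with $uv \notin E$. Because every $\sigma_i$ in the cycle is now known to be a cocomparability ordering, the Flipping Lemma (Lemma~\ref{flippinglemma}) applies at each step and reverses the relative order of $u$ and $v$, i.e. $u \prec_{\sigma_i} v \iff v \prec_{\sigma_{i+1}} u$. Hence the order of $u$ and $v$ in $\sigma_i$ agrees with their order in $\sigma_1$ precisely when $i$ is odd. The cycle closes with $\sigma_{k+1} = \sigma_1$, so the order of $u$ and $v$ in $\sigma_{k+1}$ certainly agrees with that in $\sigma_1$; this forces $k+1$ to be odd, that is, $k$ to be even. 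As this holds for every cycle and $\LexCycle(G) \ge 2$ is already known, $\LexCycle(G)$ is even and at least $2$.

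The only gap is the degenerate case where $G$ is complete, so that no non-edge is available to flip. I would dispatch it directly: in a complete graph all unnumbered vertices remain tied throughout the search, so $\LexBFS^+$ selects them in the reverse of $\sigma$, giving $\LexBFS^+(\sigma) = \sigma^d$, and a second application returns $\sigma$; thus the cycle has length $2$ (for $n \ge 2$), which is even. The main obstacle is really the bookkeeping of the first paragraph: the Flipping Lemma requires a cocomparability ordering as input, whereas the definition of $\LexCycle$ imposes no condition on the starting ordering, so the work is in verifying that the orderings in the cycle have already converged to cocomparability orderings before the parity argument can be invoked.
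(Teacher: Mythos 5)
Your proof is correct and takes essentially the same route as the paper's: a parity argument on a fixed non-edge via the Flipping Lemma, with the complete graph dispatched separately through $\LexBFS^+(\sigma)=\sigma^d$. You are in fact more careful than the paper, which dismisses the non-complete case as a ``trivial consequence of the Flipping Lemma'': your first paragraph, using periodicity together with Theorems~\ref{insidecocomp} and~\ref{maintaincocomp} to show that every ordering in the cycle is a cocomparability ordering, supplies precisely the justification the paper leaves implicit, since Lemma~\ref{flippinglemma} requires a cocomparability ordering as input while the definition of $\LexCycle$ places no restriction on the starting ordering $\sigma_0$.
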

\begin{proof}
	If $G$ contains a pair of nonadjacent vertices, then the claim is a trivial consequence of the 
	Flipping Lemma. Otherwise $G$ is a complete graph and $\sigma_2=\sigma_1^d$
	is the cycle of length 2.
\end{proof}
 An example of a graph which illustrates that this is not the case for all graphs is the graph $G_3$ with $\LexCycle(G_3)=3$ drawn in Figure~\ref{G3}.\\

 If Conjecture~\ref{conj:DusartHabib} is true, then Theorems~\ref{thm:DusartHabib} and~\ref{maintaincocomp} together imply that for any starting ordering $\sigma_0$, a $\LexBFS^+$ multisweep on a cocomparability graph $G$ always ends on a $2$-cycle consisting of two cocomparability orderings of $G$. 
 Therefore, if Conjecture~\ref{conj:DusartHabib} is true, we would have the following simple algorithm for getting a transitive orientation  of a comparability graph.

\begin{algorithm}[H]
	\caption{A \emph{Potential} Simple Transitive Orientation Algorithm}
	\label{transitive}
	\begin{algorithmic}[1]
		\Input A comparability graph $G=(V,E)$
		\Output A comparability order of $G$
		\State Construct $G'$ the complement of $G$
		\State Take an arbitrary order $\sigma_0$ on the vertices of $G'$
		\State $\sigma_1 \leftarrow \LexBFS^+(G',\sigma_0)$,  $\sigma_2 \leftarrow \LexBFS^+(G',\sigma_1)$
		\State $i \leftarrow 2$
		\While{$\sigma_i \neq \sigma_{i-2}$}
		\State $i \leftarrow i+1$
		\State $\sigma_i \leftarrow \LexBFS^+(G',\sigma_{i-1}$)
		\EndWhile
		\State \Return $\sigma_i$
	\end{algorithmic}
\end{algorithm}

\section{Domino-free Cocomparability Graphs}
\label{sec:classes}

In support of Conjecture~\ref{conj:DusartHabib}, we show in this section that the conjecture holds for the subclass of domino-free cocomparability graphs. 
This class in particular includes the classes of proper interval, interval and cobipartite graphs, but for these three subclasses we provide independent proofs which imply stronger results. 
For interval graphs we show that the two orderings of the $\LexCycle$ are left endpoint and right endpoint orderings of the \emph{same} interval representation, and that such a cycle is reached in at most $n$ iterations of the multisweep algorithm. 
Moreover in the case of proper interval graphs, we prove that the cycle is reached in at most 3 iterations and that the 2 cycles are duals one of another. 
The independent proof for cobipartite graphs is, first of all, interesting for the different flavor of the proof, and secondly it provides an upper bound of $3n$ iterations of multisweep algorithm before reaching the cycle.

\label{ssec:domino}
\subsection{Domino-free cocomparability graphs}
Here we prove the more general result of the paper regarding Conjecture \ref{conj:DusartHabib}. 
Recall that a domino is the graph obtained from a cycle of length $6$ by adding a diametral chord (see Figure \ref{fig:domino}).

\begin{theorem}\label{thm:domino}
	Domino-free cocomparability graphs have $\LexCycle = 2$.
\end{theorem}

\begin{proof}
	Let $G=(V, E)$ be a domino-free cocomparability graph. Let $\sigma_1, \ldots, \sigma_k$ be a LexBFS$^+$ cycle obtained by a multisweep $\LexBFS^+$ process on $G$, and assume by contradiction that $k > 2$. 
	Recall that by Corollary \ref{evenlength}, $k$ is even and also because of Theorem \ref{thm:DusartHabib} and Theorem \ref{maintaincocomp}, we can assume that every $\sigma_i$ is a cocomparability ordering. 	For two consecutive orderings of the same parity (index $i$ is considered mod $k$) :
	\begin{align*}
	    \sigma_i = u_1, u_2, \ldots, u_n \text{ \hspace{0.2cm} and \hspace{0.2cm}} \sigma_{i+2} = v_1, v_2, \ldots, v_n 
	\end{align*}
	let $\text{diff}(i)$ denote the index of the first (left most) vertex that is different in $\sigma_i, \sigma_{i+2}$:
	\begin{align*}
	    \diff(i) =\min \{ j \in [n]  \mid u_j \neq v_j \}
	\end{align*}
	
	Now up to ``shifting" the start of the cycle, we can assume without loss of generality that $\diff(1)$ is minimal amongst all $\diff(i)$. 
	Also from now on, in order to use lighter notations, we will write $\prec_i$ instead of $\prec_{\sigma_i}$, and $LMPN(x|_{k}y)$ instead of $LMPN(x|_{\sigma_k}y)$.

	Let then $a, b$ be the first (left most) difference between $\sigma_1$ and $\sigma_3$. 
	Denoting $\sigma_1 = u_1, u_2, \ldots, u_n$ and $\sigma_3 = v_1, v_2, \ldots, v_n$, and $j = \text{diff}(1)$, we have thus $u_i = v_i, \forall i < j$ and $u_j = a, v_j = b$. 
	Note that this implies in particular $a \prec_1 b$ and $b \prec_3 a$. 
	Furthermore, if we define $S = \{u_1, \ldots, u_{j-1}\} = \{v_1, \ldots, v_{j-1}\}$, then $\sigma_1[S] = \sigma_3[S]$, so at the time $a$ (resp. $b$) was chosen in $\sigma_1$ (resp. $\sigma_3$), $b$ (resp. $a$) had the same label. 
	Therefore in both cases it means the $^+$ rule was applied to break ties between $a$ and $b$ and so $b \prec_{k} a$ and $a \prec_2 b$. 
	We thus have :
	
    \begin{figure}[H]
    \begin{center}
        \includegraphics[page=1, scale = 1]{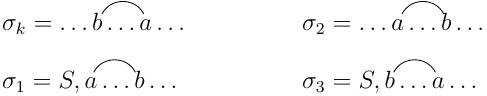}
    \end{center}
    \end{figure}

	Since $a \prec_{1} b$ and $a \prec_{2} b$, Lemma~\ref{lem:sameorder} applies, so we choose vertex $c$ as $c = \text{LMPN}(a |_2 b)$.
	Using the Flipping Lemma on $b$ and $c$, we place vertex $c$ in the remaining orderings as follows:

    \begin{figure}[H]
    \begin{center}
        \includegraphics[page=2, scale = 1]{DomiPro.pdf}
    \end{center}
    \end{figure}

	This gives rise to a bad LexBFS triple in $\sigma_k$ where $c \prec_k b \prec_k a$ and $ca \in E, cb \notin E$. By the LexBFS $C_4$ Property \ref{c4cocomp}, there exists a vertex $d \prec_{k} c$ such that $d = \text{LMPN}(b |_{k} a)$ and $dc\in E$.
	 We again use the Flipping Lemma for $ad \notin E$ to place $d$ in the remaining orderings. Note that in $\sigma_2$, the Flipping Lemma places $d \prec_2 a$, and by the choice of $c$ as $\text{LMPN}(a|_{2}b)$, it follows that no private neighbour of $b$ with respect to $a$ could be placed before $c$ in $\sigma_2$. 
	 Therefore we can conclude that $c \prec_2 d \prec_2 a$. 

    \begin{figure}[H]
    \begin{center}
        \includegraphics[page=3, scale = 1]{DomiPro.pdf}
    \end{center}
    \end{figure}

	It remains to place $d$ in $\sigma_1$ and $c$ in $\sigma_3$. 
	We start with vertex $d$ in $\sigma_1$. 
	We know that $a \prec_1 d$. 
	This gives rise to three cases: Either \textbf{(i)} $c \prec_1 d$, or \textbf{(ii)} $a \prec_1 d \prec_1 b$, or \textbf{(iii)} $b \prec_1 d \prec_1 c$. 
	
	\textbf{(i).} If $c\prec_1 d$ then since $c \prec_{2} d$, so we apply Lemma \ref{lem:sameorder} and choose a vertex $e$ as $e = \text{LMPN}(c|_{2}d)$. 
	This means $ed \notin E$, and since $da \notin E$ and $e \prec_2 d \prec_2 a$, it follows that $ea \notin E$ for otherwise the triple $e,d,a$ would form an umbrella.
    
    \begin{figure}[H]	
    \begin{center}
        \includegraphics[page=4, scale = 1]{DomiPro.pdf}
    \end{center}
    \end{figure}
	
	Furthermore, by the choice of vertex $c$ as $\text{LMPN}(a|_{2}b)$, and the facts that $e \prec_2 c$ and $ea \notin E$, it follows that $eb \notin E$, otherwise $e$ would be a private neighbour of $b$ with respect to $a$ that is to the left of $c$ in $\sigma_2$. 
	Using the Flipping Lemma, we place vertex $e$ in the remaining orderings, and in particular, placing vertex $e$ in $\sigma_k$ gives rise to a bad LexBFS triple $e, d, c$. 
	By the LexBFS 4PC and the LexBFS $C_4$ Property, there must exist a vertex $f$ chosen as $f = \text{LMPN}(d|_{k} c)$ and $fe \in E$. 
	Using the same argument above, one can show that $fc \notin E$ and $cb \notin E$ implies $fb \notin E$, and given the choice of $d$ in $\sigma_1$ and $fb \notin E$, then $fa \notin E$. 
	We, therefore, have the induced domino $abcdef$. A contradiction to $G$ being domino-free.

	\textbf{(ii).} If $a \prec_1 d \prec_1 b$, then $a, d, b$ forms a bad LexBFS triple, and thus by Theorem \ref{LexBFScond}, choose vertex $e\prec_1 a$ as $e = \text{LMPN}(d|_1b)$, therefore $eb \notin E$. 
	By the $C_4$ property (Property \ref{c4cocomp}), $ea \in E$. 
	Since $e \prec_1 a$, it follows $e \in S$. 
	But then $ea \in E, eb \notin E$ implies $\lexlabel(a) \neq \lexlabel(b)$ when $a,b$ were chosen. 
	A contradiction to $S \cap N(a) = S \cap N(b)$. 
	
	\begin{figure}[H]	
    \begin{center}
        \includegraphics[page=5, scale = 1]{DomiPro.pdf}
    \end{center}
    \end{figure}

	\textbf{(iii).} We thus must have $b \prec_1 d \prec_1 c$, in which case we still have a bad LexBFS triple given by $a, d, c$ in $\sigma_1$. 
	Choose vertex $e \prec_1 a$ as $e = \text{LMPN}(d|_1c)$ {\color{black} (and remember for later that as explained after Theorem \ref{LexBFScond}, $e$ is such that every vertex placed before $e$ is either a common neighbour or a common non-neighbour of $c$ and $d$)}. 
	By property \ref{c4cocomp}, $ea \in E$, and since $e \prec_1 a$, it follows $e \in S$, and thus $eb \in E$ since $S\cap N(a) = S \cap N(b)$. 
	Since $\sigma_1[S] = \sigma_3[S]$, it follows that $e$ appears in $\sigma_3$ in $S$, and thus $e$ is the LMPN$(d|_3c)$ as well. Therefore $d \prec_3 c$. 
	The orderings look as follows: 

    \begin{figure}[H]	
    \begin{center}
        \includegraphics[page=6, scale = 1]{DomiPro.pdf}
    \end{center}
    \end{figure}

	Consider the ordering of the edge $cd$ in $\sigma_{k-1}$. 
	If $d \prec_{k-1}c$, we use the same argument above to exhibit a domino as follows: if $d \prec_{k-1}c$, then $d \prec_{k-1,k} c$, so choose a vertex $p = \text{LMPN}(d|_kc)$. 
	Therefore $pc \notin E$, and since $cb \notin E$ and $p \prec_k c\prec_k b$, it follows that $pb \notin E$ as otherwise we contradict $\sigma_k$ being a cocomparability ordering. 
	Moreover, given the choice of vertex $d$ in $\sigma_k$ as the LMPN$(b|_ka)$ and the fact that $p \prec_k d, pb \notin E$, it follows that $pa \notin E$ as well. 
	We then use the Flipping Lemma to place vertex $p$ in $\sigma_2$. 
	This gives rise to a bad LexBFS triple $p, c, d$ in $\sigma_2$. 
	Choose vertex $q \prec_2 p$ as $q = \text{LMPN}(c|_2d)$. 
	Again, one can show that $qa, qb \notin E$, and thus the $C_4$s in $\{a, b, c, d, p, q\}$ are induced, therefore giving a domino; a contradiction to $G$ being domino-free.
	
	Therefore we must have $c \prec_{k-1} d$. 
	Consider now the first (left most) difference between $\sigma_{k-1}$ and $\sigma_1$. 
	Let $S'$ be the set of initial vertices that is the same in $\sigma_{k-1}$ and $\sigma_1$. 
	By the choice of $\sigma_1$ as the start of the cycle $\sigma_1, \sigma_2, \ldots, \sigma_k$, and in particular as the ordering with minimum diff$(1)$, we know that $|S| \le |S'|$. 
	Since $S$ and $S'$ are both initial segments  of $\sigma_1$, it follows that $S \subseteq S'$, and the ordering of the vertices in $S$ is the same in $S'$ in $\sigma_1$; $\sigma_1[S] \subseteq \sigma_1[S']$. 
    {\color{black}In particular vertex $e$ as constructed above appears in $S'$ as the left most private neighbour of $d$ with respect to $c$ in $\sigma_1$, and every vertex before $e$ in $\sigma_{k-1}$ is either a common neighbour of $c$ and $d$ or a common non-neighbour of $c$ and $d$. But then $d$ must have been chosen before $c$, which contradicts $c \prec_{k-1} d$.}

	Notice that in all cases, we never assumed that $S \neq \emptyset$. 
	The existence of an element in $S$ was always forced by bad LexBFS triples. 
	If $S$ was empty, then case \textbf{(i)} would still produce a domino, and cases \textbf{(ii), (iii)} would not be possible since $e \in S$ was forced by LexBFS. 
	
	To conclude, if $G$ is a domino-free cocomparability graph, then  it cannot have $\LexCycle(G) > 2$.  
\end{proof}

\subsection{Interval graphs}
For the special case of interval graphs, we prove a stronger statement about the $2$-cycle: it is reached almost as soon as one gets a cocomparability order, and furthermore the two orderings are left and right endpoint ordering of the same interval representation.

\begin{theorem}
	\label{thm:interval}
	Let $G$ be an interval graph with $|V(G)| > 1$, $\sigma_0$ an arbitrary {\color{black}LexBFS} cocomparability order of $G$ and $\{ \sigma_i, \}_{i\geq 1} $ a sequence of LexBFS$^+$ orderings where $\sigma_i = \text{LexBFS}^+(\sigma_{i-1})$. 
	Then the following properties hold :
	\begin{itemize}
	    \item $\sigma_1 = \sigma_3$.
	    \item There exists an interval representation $\cal R$ of $G$ such that $\sigma_1$ and $\sigma_2$ are respectively a left endpoint ordering and a right endpoint ordering of $\cal R$.
	\end{itemize}
\end{theorem}

Before giving the proof of the theorem, let us observe that by Theorem \ref{thm:DusartHabib}, in any multisweep $\LexBFS^+$ sequence such an order $\sigma_0$ is reached in at most $n$ steps, if $n$ is the number of vertices of the graph. 
Consequently we have that the $2$-cycle is reached in at most $n+1$ steps for interval graphs.

Moreover, the second item above implies in particular that $\sigma_1$ and $\sigma_2$ are $I$-orderings.
This is in fact guaranteed by the following easy lemma.

\begin{lemma}\label{lem:LexBFS+Cocomp=LeftPoint}
    Let $G$ be an interval graph, and $\sigma$ a cocomparability ordering of $G$. Then $\tau = \LexBFS^+(G, \sigma)$ is an I-ordering of $G$. 
\end{lemma}
\begin{proof}
    Assume by contradiction $\tau$ is not an I-ordering. 
    Then there exists a a triple $a \prec_{\tau} b \prec_{\tau} c$ where $ac \in E$ and $ab \notin E$. 
    Thus the triple $abc$ forms a bad triple in $\tau$ and thus by the $\LexBFS C_4$ property (Property~\ref{c4cocomp}), there exists a vertex $d \prec_{\tau} a$ such that $d,a,b,c$ induces a $C_4$ in $G$, a contradiction to $G$ being chordal, and thus interval. 
\end{proof}

Here is a second lemma that will imply the second item of the Theorem.
\begin{lemma}
\label{lem:intervalrep}
    Let $G$ be an interval graph, and $\sigma$ an $I$-ordering of $G$. If $\tau = \LexBFS^+(G, \sigma)$, then there exists an interval representation $\cal R$ of $G$ such that $\sigma$ and $\tau$ are respectively the left endpoint ordering and the right endpoint ordering of $\cal R$.
\end{lemma}
\begin{proof} 
    Recall that formally $\sigma$ and $\tau$ are bijections from $V$ to $\{1,\ldots,n\}$. 
    Define $f_\sigma:V\to \{1,\ldots,n\}$ by
      \[
       f_\sigma(v) = \max\{\sigma(w) \mid v\prec_{\sigma} w \text{ or } w=v\}
      \]
    Informally, $f_\sigma(v)$ is the position in $\sigma$ of the rightmost neighbour of $v$ to the right of $v$, or $\sigma(v)$ if there is no such neighbour. 
    
    For every vertex $v$, define the interval $I_v=[\sigma(v),f_\sigma(v)]$ and call $\cal R$ the resulting collection. Let us prove first that  that $\cal R$ is indeed an interval representation of $G$. Let $u, v$ be two vertices and assume without loss of generality that $u\prec_{\sigma}v$ (that is $\sigma(u)< \sigma(v)$). If $uv\in E$, then by definition $\sigma(v)\leq f_\sigma(u)$ so that $I_u$ and $I_v$ both contain $\sigma(v)$. Conversely if $uv\notin E$, then because $\sigma$ is an $I$-ordering, there is no neighbour of $u$ that is placed after $v$ in $\sigma$, so we have $f_\sigma(u)<\sigma(v)$, and therefore $I_u$ and $I_v$ are disjoint as required.

By definition $\sigma$ is a left point ordering of $\cal R$, so to conclude we have to prove that $\tau$ is a right endpoint ordering of $\cal R$, that is for any vertices $u$ and $v$, $f_\sigma(u)>f_\sigma(v)$ implies $\tau(u)<\tau(v)$. The inequality on $f_\sigma$ implies that : either $f_\sigma(u)=\sigma(u)$ and therefore $v\prec_{\sigma} u$ and $vu\notin E$, or $f_\sigma(u)>\sigma(u)$ and thus there exists $w$ placed after $u$ and $v$ in $\sigma$ such that $vw\notin E$ and $uw\in E$. But since $\sigma$ is a cocomparability ordering, the Flipping Lemma \ref{flippinglemma} applies : in the first case we directly get $u\prec_{\tau} v$ and in the second one we first have $w\prec_{\tau}v$, which, since $\tau$ is an I-ordering, also implies that $u\prec_{\tau} v$, as required.

\end{proof}
We are now ready for the proof of the main theorem of this subsection.
\begin{proof}[Proof of Theorem \ref{thm:interval}]
    Note that the second item follows directly from Lemma \ref{lem:LexBFS+Cocomp=LeftPoint} (applied to $\sigma=\sigma_0$) and Lemma \ref{lem:intervalrep} (applied to $\sigma=\sigma_1$). 
    Let us thus now prove the first item.
    Consider the following orderings:
    \begin{align*}
        \sigma_1 = \text{ LexBFS}^+(\sigma_0) \qquad 	\sigma_2 = \text{ LexBFS}^{+}(\sigma_1) \qquad 	\sigma_3 = \text{ LexBFS}^{+}(\sigma_2)
    \end{align*}
    
    Suppose, for sake of contradiction, that  $\sigma_1 \neq \sigma_3$. 
    Let $k$ denote the index of the first (left most) vertex where $\sigma_1$ and $\sigma_3$ differ. 
    In particular, let $a$ (resp. $b$) denote the $k^{\text{th}}$ vertex of $\sigma_{1}$ (resp. $\sigma_{3}$). 
    Let $S$ denote the set of vertices preceding $a$ in $\sigma_{1}$ and $b$ in $\sigma_{3}$. 
    
    Since the ordering of the vertices of $S$ is the same in both $\sigma_{1}$ and $\sigma_{3}$, and $a,b$ were chosen in different LexBFS orderings, it follows that $\lexlabel(a) = \lexlabel(b)$ in both $\sigma_1$ and $\sigma_3$ when both $a$ and $b$ were being chosen. 
    Therefore, $N(a) \cap S = N(b) \cap S$. So if $a$ were chosen before $b$ in $\sigma_1$ then the $^+$ rule must have been used to break ties between $\lexlabel(a) = \lexlabel(b)$. 
    This implies $b \prec_0 a$, similarly $a \prec_2 b$. 
    The ordering of the pair $a,b$ is thus as follows: 
    \begin{align*}
        \sigma_0 : \hspace{0.5cm}& \ldots b \ldots a \ldots &\sigma_2 : \hspace{0.5cm}& \ldots a \ldots b \ldots\\
        \sigma_1 : \hspace{0.5cm}& \ldots a \ldots b \ldots  &\sigma_3 : \hspace{0.5cm}& \ldots b \ldots a \ldots 
    \end{align*}
    Using the Flipping Lemma, it is easy to see that $ab \in E$. 
    Since $a\prec_{1,2} b$, we can apply Lemma \ref{lem:sameorder} and choose a vertex $c$ as $c = \text{LMPN}(a|_2b)$. 
    Therefore $c \prec_2 a \prec_2 b$ and $ac \in E, bc \notin E$. 
    
    Since $\sigma_0$ is a cocomparability order, by Theorem \ref{maintaincocomp}, $\sigma_1, \sigma_2, \sigma_3$ are cocomparability orderings. 
    Using the Flipping Lemma on the non-edge $bc$, we have $c \prec_2 b$ implies $c \prec_0 b$. 
    Therefore in $\sigma_0$, $c \prec_0 b \prec_0 a$ and $ac \in E, bc \notin E$. 
    Using the LexBFS 4PC (Theorem \ref{LexBFScond}), there exists a vertex $d$ in $\sigma_0$ such that $d \prec_0 c \prec_0 b \prec_0 a$ and $db \in E, da \notin E$. 
    By the LexBFS $C_4$ cocomparability property (Property \ref{c4cocomp}), $dc \in E$ and the quadruple $abdc$ forms an induced $C_4$ in $G$, thereby contradicting $G$ being an interval graph. 
\end{proof}
\subsection{Proper Interval Graphs} 
For proper interval graphs, Corneil proved the following result, which is stronger than Theorem~\ref{thm:DusartHabib}:
\begin{theorem}\cite{Corneil04}\label{3sweep}
	A graph $G$ is a proper interval graph if and only if the third LexBFS$^+$ sweep on $G$ is a PI-ordering. 
\end{theorem}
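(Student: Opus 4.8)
The statement is an equivalence, and one direction is immediate: if the third sweep $\sigma_3$ is a PI-order, then $G$ admits an ordering satisfying the proper interval VOC, so $G$ is a proper interval graph. All the content lies in the forward direction, so the plan is to assume $G$ is a proper interval graph and prove that $\sigma_3$ is a PI-order. I would first reduce to $G$ connected, since a LexBFS processes the connected components one after another and a PI-order of $G$ is a concatenation of PI-orders of its components.

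The first thing I would record is that the $^+$ rule forces each sweep to begin where the previous one ended: at the initial step every label equals $\epsilon$, so LexBFS$^+$ selects the vertex that is rightmost in the previous ordering, namely its last vertex. Hence $\sigma_2$ starts at the last vertex of $\sigma_1$ and $\sigma_3$ starts at the last vertex of $\sigma_2$. I would then invoke the classical structure of LexBFS on interval graphs (Corneil et al.): the last vertex of any LexBFS on a proper interval graph is an \emph{extreme} vertex, i.e. an endpoint of some PI-order, and a LexBFS that starts at an extreme vertex is an I-order. Applied here, $\sigma_2$ starts at an extreme vertex $z_1$ and is therefore an I-order (in particular a cocomparability order), and its last vertex $z_2$ is again extreme, which for a connected proper interval graph is the end opposite to $z_1$; thus $\sigma_3$ starts at $z_2$. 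Since $\sigma_2$ is a cocomparability order, Theorem~\ref{maintaincocomp} makes $\sigma_3=\LexBFS^+(\sigma_2)$ a cocomparability order as well, and being a LexBFS ordering it satisfies the hypotheses of the LexBFS $C_4$ Property~\ref{c4cocomp}.

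The core of the proof is to upgrade $\sigma_3$ from a cocomparability order to a full PI-order, and here I would use that a PI-order is exactly an ordering for which both it and its reverse are I-orders. The forward-gap half is clean: a proper interval graph is an interval graph, hence chordal and $C_4$-free, so by Property~\ref{c4cocomp} the LexBFS cocomparability order $\sigma_3$ has no bad LexBFS triple, which is precisely the I-order condition $ac\in E\Rightarrow ab\in E$ for $a\prec_{\sigma_3}b\prec_{\sigma_3}c$. The backward-gap half, showing $ac\in E\Rightarrow bc\in E$ as well, is the crux. For this I would argue by contradiction on a triple $a\prec_{\sigma_3}b\prec_{\sigma_3}c$ with $ac\in E$ and $bc\notin E$, transporting the non-edge $bc$ back to $\sigma_2$ via the Flipping Lemma (which reverses $\sigma_2$'s non-edges in $\sigma_3$), and combining this with the I-order property of $\sigma_2$ and the fact that $\sigma_3$ starts at the extreme $z_2$ opposite to $\sigma_2$'s start $z_1$ to contradict $z_1$ or $z_2$ being an endpoint of the interval line.

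The main obstacle is exactly this backward-gap argument, and it is also the reason the theorem names the third sweep rather than the second. A single LexBFS from an arbitrary vertex already lands on an extreme vertex, so $\sigma_2$ starts correctly and is an I-order; but an I-order that starts at one extreme need not be a PI-order, since its tie-breaking is inherited from the uncontrolled $\sigma_1$, so one cannot close the backward gap at the second sweep. Only at $\sigma_3$ are both ends pinned down, as it starts at $z_2$ while breaking ties by $\sigma_2$, which itself started at the opposite extreme $z_1$; making this double control force the right-side edges everywhere is the delicate step. The levers I expect to use for it are the Flipping Lemma (Lemma~\ref{flippinglemma}) and Claim~\ref{flipedge}, which move ordering information about non-edges and edges between consecutive sweeps.
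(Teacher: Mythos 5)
First, a point of comparison: the paper does not prove Theorem~\ref{3sweep} at all --- it is imported verbatim from~\cite{Corneil04}, so your attempt must be measured against Corneil's original argument, which is a multi-page analysis of how ties propagate across the three sweeps. Much of your frame is sound: the backward direction is indeed immediate from the VOC definition; the $^+$ rule does force each sweep to start at the last vertex of the previous one; and your forward-gap half is correct and economical ($\sigma_2$ is a cocomparability order, so by Theorem~\ref{maintaincocomp} $\sigma_3$ is a LexBFS cocomparability order, and since proper interval graphs are chordal, Property~\ref{c4cocomp} rules out any bad LexBFS triple, making $\sigma_3$ an I-order). The framing of a PI-order as an ordering that is an I-order in both directions is also the right one. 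One caution on your cited lemma, however: as stated (``classical structure of LexBFS on interval graphs''), the claim that a LexBFS started at an extreme vertex is an I-order is \emph{false} for interval graphs in general --- this failure is precisely why the recognition algorithm of~\cite{CorneilOS09} needs the far more elaborate LBFS$^*$ machinery. It does hold for \emph{proper} interval graphs (BFS layers from an end vertex are cliques, and neighborhoods into the next layer form nested prefixes of it), but that is a fact you must prove, not cite.

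The genuine gap is that your ``backward-gap'' step --- showing $bc \in E$ for every triple $a \prec_{\sigma_3} b \prec_{\sigma_3} c$ with $ac \in E$ --- is the entire content of the theorem, and your proposal contains only a declaration of intent for it, not an argument. Trace your own plan: a bad triple gives $ac, ab \in E$, $bc \notin E$; the Flipping Lemma (Lemma~\ref{flippinglemma}) yields $c \prec_2 b$; but the resulting configurations (e.g.\ $a \prec_2 c \prec_2 b$) are perfectly consistent with $\sigma_2$ being an I-order anchored at $z_1$, so extremality of $z_1$ and $z_2$ does not by itself produce a contradiction --- closing this requires exactly the fine-grained label/tie analysis across $\sigma_1, \sigma_2, \sigma_3$ that constitutes the bulk of Corneil's proof, and which you correctly identify as ``the delicate step'' without supplying it. Worse, one of your two declared levers is unavailable: Claim~\ref{flipedge} has as its \emph{hypothesis} that the previous sweep is a PI-order, so invoking it between $\sigma_2$ and $\sigma_3$ presupposes what you are trying to establish; as you yourself observe, an I-order starting at an extreme vertex need not be a PI-order, so $\sigma_2$ does not satisfy the claim's hypothesis. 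The proposal is thus a correct scaffold --- easy direction, connectivity reduction, end-vertex anchoring, I-order upgrade via chordality --- with the decisive step missing and one of its proposed tools circular.
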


We already know by Theorem \ref{thm:interval} that the $2$ cycle is reached one step after reaching an $I$-order. For proper interval we prove additionally that the $2$ orderings in a $2$-cycle are duals one of another.
\begin{theorem}
	\label{thm:properinterval}
	Let $G$ be a proper interval graph and $\sigma$ a PI-ordering of $G$, then LexBFS$^+(\sigma)=\sigma^d$.
\end{theorem}
\begin{proof}
    Define $\tau=\text{LexBFS}^+(\sigma)$. 
    All we have to prove is that for any vertices $x\prec_{\sigma}y$ implies $y\prec_{\tau}x$. 
    For non edges this is exactly Flipping Lemma \ref{flippinglemma}, so we can assume that $xy\in E$. 
    Assume by contradiction that $x\prec_{\tau}y$. 
    Since the pair maintained the same order on consecutive sweeps, we can apply Lemma \ref{lem:sameorder} to get a vertex $z$ such that $z \prec_{\tau} x \prec_{\tau} y$ and $zx \in E, zy \notin E$.
    Using the Flipping Lemma, this implies $x\prec_{\sigma}y\prec_{\sigma}z$  with $xy,xz \in E$ and $yz \notin E$, which contradicts $\sigma$ being a PI-ordering. 
\end{proof}

Therefore, using Theorem \ref{thm:properinterval} and Theorem \ref{3sweep}, we get Corollary~\ref{concludepi}.

\begin{corollary}
\label{concludepi}
	If $G$ is a proper interval graph with $|V(G)| > 1$, Algorithm \ref{transitive} stops at $\sigma_5=\sigma_3$, if not sooner.
\end{corollary}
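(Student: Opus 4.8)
The plan is to chain together the two results just quoted: Theorem~\ref{3sweep}, which guarantees that the third sweep $\sigma_3$ is already a PI-order, and Theorem~\ref{properinterval}, which says that applying $\LexBFS^+$ to any PI-order returns its dual. Combined with the elementary observation that the dual of a PI-order is again a PI-order, this forces the sequence of sweeps into a $2$-cycle the moment $\sigma_3$ is reached, so that $\sigma_5 = \sigma_3$.

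Concretely, the steps I would carry out are as follows. First, invoke Theorem~\ref{3sweep} to conclude that $\sigma_3$, the third $\LexBFS^+$ ordering computed in Algorithm~\ref{transitive}, is a PI-order of $G$. Second, apply Theorem~\ref{properinterval} to $\sigma_3$ to obtain $\sigma_4 = \LexBFS^+(\sigma_3) = \sigma_3^d$. Third, check that $\sigma_3^d$ is itself a PI-order: the defining condition of a PI-order (for every triple $a \prec b \prec c$ with $ac \in E$ one has both $ab, bc \in E$) is symmetric under reversal of the ordering, so an ordering and its dual satisfy it simultaneously. Fourth, apply Theorem~\ref{properinterval} once more, now to the PI-order $\sigma_4 = \sigma_3^d$, to get $\sigma_5 = \LexBFS^+(\sigma_4) = \sigma_4^d = (\sigma_3^d)^d = \sigma_3$.

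It then remains to read off the behaviour of Algorithm~\ref{transitive} from the identity $\sigma_5 = \sigma_3$. The while-loop terminates precisely when $\sigma_i = \sigma_{i-2}$, so once the index reaches $i=5$ the test $\sigma_5 \neq \sigma_3$ fails and the algorithm halts, returning $\sigma_5 = \sigma_3$. If the starting ordering $\sigma_1$, or $\sigma_2$, happened already to be a PI-order, the same dual argument gives $\sigma_3 = \sigma_1$ or $\sigma_4 = \sigma_2$ and the loop exits even sooner; in every case $i=5$ is an upper bound on the stopping index.

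I do not anticipate a genuine obstacle here: the statement is essentially a bookkeeping consequence of the two cited theorems. The only points requiring a moment's care are the symmetry claim that the dual of a PI-order is a PI-order — immediate from the form of the defining condition — and the fact that it is $\sigma_3$ specifically, rather than some later sweep, that is guaranteed to be a PI-order, which is exactly the content of Theorem~\ref{3sweep}.
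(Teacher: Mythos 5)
Your proposal is correct and follows essentially the same route as the paper: invoke Theorem~\ref{3sweep} to get that $\sigma_3$ is a PI-order, then apply Theorem~\ref{properinterval} twice to obtain $\sigma_4=\sigma_3^d$ and $\sigma_5=\sigma_4^d=\sigma_3$. Your explicit check that the dual of a PI-order is again a PI-order (needed to apply Theorem~\ref{properinterval} to $\sigma_4$) is a detail the paper leaves implicit, and it is a welcome addition rather than a deviation.
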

\begin{proof}
	By Theorem \ref{3sweep}, we know that $\sigma_3$ is a PI-ordering. 
	Using Theorem \ref{thm:properinterval}, we conclude that Algorithm \ref{transitive} applied on a PI-ordering computes $\sigma_4=\sigma_3^d$ and $\sigma_5=\sigma_4^d= (\sigma_3^d)^d=\sigma_3$.
\end{proof}
\subsection{Cobipartite Graphs}
\label{cobip}
In this section we study cobipartite graphs, i.e. graphs whose vertex set can be partitioned into two cliques. 
These are clearly domino-free (as the complement of a domino contains a triangle), so the fact that such graphs have LexCycle equal to $2$ is a consequence of Theorem \ref{thm:domino}. 
In this section we give a separate proof of this result that we think is interesting for three reasons : 
\begin{itemize}
    \item We prove that the cycle is reached in at most $3n$ sweeps.
    \item We prove that the cycle is in fact composed of an order and its dual.
    \item The proof technique sheds light on the link between this problem and the one of doubly lexicographic orderings on rows and columns of matrices.
\end{itemize}

Let $G=(V = A \cup B, E)$ be a cobipartite graph, where both $A$ and $B$ are cliques. 
Notice that any ordering $\sigma$ on $V$ obtained by first placing all the vertices of $A$ in any order followed by the vertices of $B$ in any order is a cocomparability ordering. 
In particular, such an ordering is precisely how any LexBFS cocomparability ordering of $G$ is constructed, as shown by Lemma \ref{2cliques} below. We first show the following easy observation. 

\begin{lemma}
\label{adjcob}
	Let $G$ be a cobipartite graph, and let $\sigma$ be a cocomparability ordering of $G$. 
	In any triple of the form $a \prec_{\sigma} b \prec_{\sigma} c$, either $ab \in E$ or $bc \in E$. 
\end{lemma}
\begin{proof}
	Suppose otherwise, then if $ac \in E$, we contradict $\sigma$ being a cocomparability ordering, and if $ac \notin E$, then the triple $abc$ forms a stable set of size 3, which is impossible since $G$ is cobipartite. 
\end{proof}
\begin{lemma}\label{2cliques}
	Let $G$ be a cobipartite graph, and let $\sigma = x_1,x_2,\ldots x _n$ be a LexBFS cocomparability ordering of $G$. There exists $i \in [n]$ such that $\{x_1,\ldots,x_i\}$ and $\{x_{i+1},\ldots,x_n\}$ are both cliques.
\end{lemma}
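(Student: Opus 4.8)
The plan is to let $i$ be the largest index for which the prefix $\{x_1,\ldots,x_i\}$ is a clique, and then prove that the complementary suffix $\{x_{i+1},\ldots,x_n\}$ is also a clique; this yields the desired split. If the whole ordering is a clique we simply take $i=n$, so assume $i<n$. By maximality, $x_{i+1}$ has a non-neighbour $x_j$ with $j\le i$ (a \emph{left non-neighbour}), and every vertex of the prefix is adjacent to all of its predecessors. The workhorse throughout is Claim~\ref{adjcob}, which I read contrapositively: no vertex can have a non-neighbour both to its left and to its right, so each vertex is adjacent either to all vertices preceding it or to all vertices following it.

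First I would record a consequence for $x_{i+1}$ itself: since $x_{i+1}$ has the left non-neighbour $x_j$, Claim~\ref{adjcob} forces $x_{i+1}$ to be adjacent to every vertex to its right. Now suppose, for contradiction, that the suffix is not a clique, and pick a non-edge $x_px_q\notin E$ with $i+1\le p<q$. If $p=i+1$ this immediately contradicts the previous sentence, so we may assume $p>i+1$. Applying Claim~\ref{adjcob} to $x_p$ (which has the right non-neighbour $x_q$) shows that $x_p$ is adjacent to all of its predecessors; in particular $x_jx_p\in E$ while $x_jx_{i+1}\notin E$.

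The heart of the argument is then to exploit that $x_j\prec_{\sigma}x_{i+1}\prec_{\sigma}x_p$ is a bad LexBFS triple. Using the LexBFS $4$-point condition (Theorem~\ref{LBFScond}), or equivalently the $C_4$ property (Property~\ref{c4cocomp}), I obtain a vertex $d\prec_{\sigma}x_j$ with $dx_p\notin E$. I would finish by examining the pair $d,x_q$: if $dx_q\in E$, then in the triple $d\prec_{\sigma}x_p\prec_{\sigma}x_q$ the two outer vertices are adjacent while $d$ is non-adjacent to $x_p$ and $x_p$ is non-adjacent to $x_q$, contradicting that $\sigma$ is a cocomparability ordering; and if $dx_q\notin E$, then $\{d,x_p,x_q\}$ is an independent set of size three, which is impossible in a cobipartite graph. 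Either way we reach a contradiction, so the suffix is a clique and the lemma follows.

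The step I expect to be the crux---and the reason the hypothesis cannot be weakened to ``cocomparability ordering'' alone---is producing the vertex $d$ strictly to the left of $x_j$. A genuinely interleaved ordering such as $u_1,v_1,u_2,v_2$ of an induced $C_4$ (with non-edges $u_1v_1$ and $u_2v_2$) is a perfectly good cobipartite cocomparability ordering that admits no such clique split; what rules it out is precisely the LexBFS $4$-point condition, which fails here because the offending bad triple would require a private neighbour to the left of the very first vertex. Thus the main obstacle is packaging this LexBFS-specific information (via Theorem~\ref{LBFScond} / Property~\ref{c4cocomp}) together with the ``no independent triple'' feature of cobipartite graphs so as to dispatch both cases of the final dichotomy.
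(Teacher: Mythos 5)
Your proof is correct, and while it uses the same skeleton as the paper---take $i$ maximal with $\{x_1,\ldots,x_i\}$ a clique, suppose a non-edge survives in the suffix, and drive everything through Claim~\ref{adjcob}---the engine of the final contradiction is genuinely different. The paper never touches the 4-point condition: given a suffix non-edge $x_jx_k$ with $j<k$, it argues directly from label monotonicity that $x_j$ cannot be universal to the prefix, for otherwise $\lexlabel(x_j)$ would be lexicographically larger than $\lexlabel(x_{i+1})$ at step $i+1$ and LexBFS would have numbered $x_j$ there; hence $x_j$ has a prefix non-neighbour $x_p$, and the single triple $x_p \prec_\sigma x_j \prec_\sigma x_k$ violates Claim~\ref{adjcob} immediately. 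You instead apply Claim~\ref{adjcob} twice (to make $x_{i+1}$ universal to its right and $x_p$ universal to its left), assemble the bad triple $x_j \prec_\sigma x_{i+1} \prec_\sigma x_p$, invoke Theorem~\ref{LBFScond} to extract $d \prec_\sigma x_j$ with $dx_{i+1}\in E$, $dx_p \notin E$, and close with the dichotomy on $dx_q$---which, you might note, is word-for-word the two cases in the proof of Claim~\ref{adjcob} applied to the triple $d \prec_\sigma x_p \prec_\sigma x_q$, so your ending compresses to one more invocation of that claim. Each route buys something: the paper's label argument is shorter and stays close to the LexBFS mechanism itself, while yours channels all LexBFS-specific information through the 4-point condition already set up in the paper, and your closing remark is a genuine bonus the paper does not make explicit---the interleaved ordering $u_1,v_1,u_2,v_2$ of an induced $C_4$ is a valid cobipartite cocomparability ordering with no clique split, and it is excluded precisely because the bad triple $u_1 \prec v_1 \prec u_2$ would need a private neighbour to the left of the first vertex. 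This correctly pinpoints that the LexBFS hypothesis cannot be dropped from the lemma.
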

\begin{proof}
	Let $i$ be the largest index in $\sigma$ such that $\{x_1,\ldots,x_i\}$ is a clique. 
	Suppose $\{x_{i+1}, \ldots, x_n\}$ is not a clique, and consider a pair of vertices $x_j, x_k$ where $x_jx_k \notin E$ and $i+1 \le j < k$. 
	By the choice of $i$, vertex $x_{i+1}$ is not universal to $\{x_1,\ldots,x_i\}$. 
	Since $\sigma$ is a LexBFS ordering, vertex $x_j$ is also not universal to $\{x_1,\ldots,x_i\}$ for otherwise $\lexlabel(x_j)$ would be lexicographically greater than $\lexlabel(x_{i+1})$ implying $j < i+1$ -  unless $i+1 = j$, in which case $x_{j}$ is $x_{i+1}$ and we just showed that $x_{i+1}$ is not universal to $\{x_1,\ldots,x_i\}$, thus $x_j$ is also not universal to $\{x_1,\ldots,x_i\}$. 
	Let $x_p \in \{x_1, \ldots, x_i\}$ be a vertex not adjacent to $x_j$. 
	We thus have $x_p \prec_{\sigma} x_j \prec_{\sigma} x_k$ and both $x_px_j, x_jx_k \notin E$. 
	A contradiction to Lemma \ref{adjcob} above.
\end{proof}

Since cobipartite graphs are cocomparability graphs, by Theorem \ref{thm:DusartHabib}, after a certain number $t \le n$ of iterations, a series of $\LexBFS^+$ sweeps yields a cocomparability ordering $\sigma_t$. 
By Lemma \ref{2cliques}, this ordering consists of the vertices of one clique $A$ followed by another clique $B$. 

Assume $a_1,\ldots,a_p,b_q,\ldots,b_1$ is the ordering of $\sigma_t$ (the reason why the indices of $B$ are reversed will be clear soon). Consider the $p\times q$ matrix $M$ defined as follows:

    \[
    M_{i,j}=
        \begin{cases}
            &\mbox{$1$ if $a_i b_j \in E$}\\
            &\mbox{$0$ otherwise}
        \end{cases}
    \]
(All through this section, and for any matrix $A$, we will denote by $A_{i,j}$ the coefficient of $A$ on row $i$ and column $j$.)

The easy but crucial property that follows from the definition of LexBFS is the following: the columns of this matrix $M$ are sorted lexicographically in increasing order (for any pair of vectors of the same length $X$ and $Y$, lexicographic order is defined by $X<_{lex} Y$ if the least integer $k$ for which $X_k\neq Y_k$ satisfies $X_k<Y_k$). 

Consider $\sigma_{t+1} = \text{LexBFS}^+(\sigma_t)$, and notice that $\sigma_{t+1}$ begins with the vertices of $B$ in the ordering $b_1,b_2,\ldots,b_q$ followed by the vertices of $A$ which are sorted exactly by sorting the corresponding rows of $M$ lexicographically in non-decreasing order (the first vertex to appear after $b_q$ being the maximal row, that is the one we put at the bottom of the matrix). 
But then to obtain $\sigma_{t+2}$ we just need to sort the columns lexicographically, and so on.

Therefore to prove that $\LexCycle = 2$ for cobipartite graphs, it suffices to show that this process must converge to a fixed point: that is, after some number of steps, we get a matrix such that both rows and columns are sorted lexicographically, which implies we have reached a $2$ cycle. 
This is guaranteed by the following Proposition (which we state for $0-1$ matrices, but the proof works identically for any integer valued matrix).

\begin{proposition}
\label{prop:matrix}
    Let $M$ be a $p\times q$ matrix with $\{0,1\}$ entries. 
    Define two sequences of matrices $(R^{(t)})_{t\ge 0}$ and $(C^{(t)})_{t\ge 1}$ as follows:
    \begin{itemize}
        \item $R^{(0)}=M$
        \item For $t\geq 1$, $C^{(t)}$ is obtained by sorting the columns of $R^{(t-1)}$ in non-decreasing lexicographical order.
        \item For $t\geq 1$, $R^{(t)}$ is obtained by sorting the rows of $C^{(t)}$ in non-decreasing lexicographical order.
    \end{itemize}
    Then there exists $k \le q$ for which $R^{(k)}=C^{(k)}$
\end{proposition}  
Note that the conclusion in fact implies that both sequences are constant from the index $k$ since this implies that $R^{(k)}$ has both its rows and columns sorted lexicographically. 
This proposition is reminiscent of the doubly lexical ordering of $\{0,1\}$ matrices studied by Lubiw in~\cite{lubiw1987doubly}. 
What we prove implies that in order to obtain this doubly lexical ordering, one can do in fact a sequence of at most $n$ LexBFS, giving thus a $O(nm)$ time, if $m$ denotes the number of non zero entries. 
Better algorithms for this problem are already known. 
For instance, in~\cite{spinradDoublyn2}, Spinrad gave an $O(n^2)$ time for dense matrices. 
\begin{proof}[Proof of Proposition \ref{prop:matrix}] 
    We rely on the following claim.\\
    
    {\bf \noindent Claim :} For every $t$, the $t$ first columns of $R^{(t)}$ are sorted in non decreasing lexicographic order, and are smaller than the $q-t$ last columns of the matrix.\\
         
    This indeed implies the desired result, as for $t=q$ we have that the whole matrix $R^{(q)}$ is doubly lexicographic (rows by construction and columns follow from the claim), which proves our Proposition. 
       
    We prove the Claim by induction on $t$.     
    This is obvious for $t=0$, so let us assume that $t\geq 1$ and that the property is true for $t-1$. 
    The induction hypothesis implies that in $C^{(t)}$, the $(t-1)$-first columns are identical to those of $R^{(t-1)}$. 
    Since the rows of $R^{(t-1)}$ were sorted (by definition), we have that the matrix resulting from $C^{(t)}$ by looking just at the first $(t-1)$-columns is sorted both for rows and columns. 
    Therefore in $R^{(t)}$, these same entries will also be identical and so the $(t-1)$ first columns of $R^{(t)}$ are sorted.  
 
    Assume by contradiction that for some $p\leq t$ and $q\geq t$, the $q$-th column of $R^{(t)}$ is strictly smaller than the $p$-th column. 
    Let then $i$ be the smallest integer such that $R^{(t)}_{i,p}\neq R^{(t)}_{i,q}$, and thus $R^{(t)}_{i,p}=1$ and $R^{(t)}_{i,q}=0$. 
    Since $R^{(t)}$ was obtained from $C^{(t)}$ by a reordering of its rows, there exists $i'$ such that  $C^{(t)}_{i',p}=1$ and $C^{(t)}_{i',q}=0$. 
    
    {\color{black}
    Since the columns in $C^{(t)}$ are sorted, we deduce that column $p$ in $C^{(t)}$ is lexicographically smaller than column $q$ in $C^{(t)}$.
    Since $C^{(t)}_{i',p}=1$ and $C^{(t)}_{i',q} = 0$, there must exist $j' < i'$ such that $C^{(t)}_{j',p} = 0$ and $C^{(t)}_{j',q} = 1$ as illustrated below. 
    
    Row $j'$ must appear somewhere in $R^{(t)}$, which was obtained by sorting rows of $C^{(t)}$.
    Recall, however, that first $(t-1)$ columns of $C^{(t)}$ have their rows sorted.
    And thus in particular, the first $(t-1)$ elements of the $j'^{th}$ row of $C^{(t)}$ are the same or lexicographically smaller than the first $(t-1)$ elements of the $i'^{th}$ row of $C^{(t)}$. 
    This means that sorting the rows of $C^{(t)}$ puts row $j'$ above row $i'$.
    Since row $i'$ lands at row $i$ in $R^{(t)}$, row $j'$ must land above row $i$ in $R^{(t)}$. 
    But all rows above row $i$ in $R^{(t)}$ have identical element in the $p^{th}$ and $q^{th}$ columns by the minimality of $i$, while $j'$ does not, a contradiction. 
    }
 
    This concludes the induction.
    
    \begin{figure}[htbp]
    \begin{center}
        \includegraphics[scale=0.75]{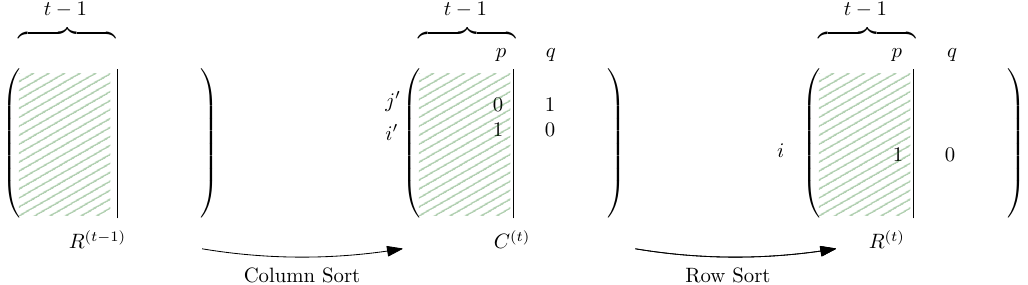}
    \caption{The matrices in Proposition \ref{prop:matrix}}
    \label{fig:matrix}
    \end{center}
    \end{figure}
\end{proof}
We conclude with the following corollary:
\begin{corollary}
	    Cobipartite graphs have $\LexCycle = 2$, this cycle is reached in fewer than $3n \LexBFS^+$ sweeps, and the two orderings that witness $\LexCycle = 2$ are duals of one another. 
\end{corollary}
\begin{proof}
    As mentioned before, Theorem \ref{thm:DusartHabib} implies that in at most $n$ sweeps one gets a cocomparability order. 
    From this point we know that we can view the successive sweeps through the incidence matrix of the edges between the two cliques. 
    Each sequence of two consecutive sweeps corresponds to sorting the columns and then the rows of the matrix, and therefore by Proposition~\ref{prop:matrix} above we get that this converges in less that $2n$ more sweeps to a fixed matrix. 
    Now this means that we have reached a cycle of length $2$ and that indeed the two orderings are duals one of another.
\end{proof}

\section{Conclusion \& Perspectives}
\label{conclusion}
In this paper, we study a new graph parameter, $\LexCycle$, which measures the maximum length of a cycle of $\LexBFS^+$ sweeps. 
We believe it reflects some measure of linearity structure of a class of graphs: if the class has some strong linear structure, $\LexCycle$ should be small for this class. 
For example, interval graphs have a clear definition of linearity: they are the graphs that arise from the intersection of intervals on the real line. 
Cocomparability graphs, and more generally AT-free graphs, also have a notion of linearity. 
Every AT-free graph, and thus every cocomparability graph, admits a dominating diametral path~\cite{COS99}. 
This dominating diametral path has been known in the literature as the \emph{spine} of the graph~\cite{COS99}, because it opens the graph in a linear fashion, where vertices not on path hang at distance one.  
We have no example of an asteroidal triple-free graph whose $\LexCycle$ is larger than 2, so it may be the case that Conjecture~\ref{conj:DusartHabib} is even true for $AT$-free graphs (which we note have asteroidal number 2). 

Note however that as shown in this paper, a stronger conjecture stating $\LexCycle$ is bounded above by the asteroidal number is false (this was the question of Stacho for which we provided a counterexample in Section \ref{sec:multisweep}). 
Our construction suggests that perhaps $\LexCycle$ cannot be bounded by any polynomial function on the number of vertices.

Towards proving Conjecture~\ref{conj:DusartHabib} about cocomparability graphs, we showed that domino-free cocomparability graphs (which contain cographs, interval graphs, cobipartite graphs) all have $\LexCycle = 2$.
Next, we motivate the choice of the domino as a forbidden structure and a potential direction to prove the conjecture for cocomparability graphs. 

Define a \textbf{$k$-ladder} to be an induced graph of $k$ \emph{chained} $C_4$s. 
More precisely, a ladder is a graph $H=(V_H, E_H)$ where $V_H = \{x_0, x_1, x_2, \ldots, x_k, y_0, y_1, \ldots, y_k\}$ and $E_H =  \{ (x_i,y_i), (x_i, x_{i+1}), (y_i, y_{i+1}) : \forall i,  0\leq i \leq k-1\} \cup \{(x_k, y_k)\}$, as illustrated in Figure \ref{ladder}.
\begin{figure}[H]
	\centering
	\begin{tikzpicture}
	[scale=.3]
	\node[circle, draw, fill=black!100, inner sep=1pt, minimum width=4pt,label=above:$x_0$] (a) at (0,0) {};
	\node[circle, draw, fill=black!100, inner sep=1pt, minimum width=4pt,label=below:$y_0$] (b) at (0,-4) {};
	
	\node[circle, draw, fill=black!100, inner sep=1pt, minimum width=4pt,label=above:$x_1$] (c) at (4,0) {};
	\node[circle, draw, fill=black!100, inner sep=1pt, minimum width=4pt,label=below:$y_1$] (d) at (4,-4) {};
	
	\node[circle, draw, fill=black!100, inner sep=1pt, minimum width=4pt,label=above:$x_2$] (e) at (8,0) {};
	\node[circle, draw, fill=black!100, inner sep=1pt, minimum width=4pt,label=below:$y_2$] (f) at (8,-4) {};
	
	\node (i) at (14-4,0) {$\cdots$};
	\node (j) at (14-4,-4) {$\cdots$};

	\node[circle, draw, fill=black!100, inner sep=1pt, minimum width=4pt,label=above:$x_{i-1}$] (k) at (20-4-4,0) {};
	\node[circle, draw, fill=black!100, inner sep=1pt, minimum width=4pt,label=below:$y_{i-1}$] (l) at (20-4-4,-4) {};
	\node[circle, draw, fill=black!100, inner sep=1pt, minimum width=4pt,label=above:$x_i$] (m) at (24-4-4,0) {};
	\node[circle, draw, fill=black!100, inner sep=1pt, minimum width=4pt,label=below:$y_i$] (n) at (24-4-4,-4) {};
	\node[circle, draw, fill=black!100, inner sep=1pt, minimum width=4pt,label=above:$x_{i+1}$] (o) at (28-4-4,0) {};
	\node[circle, draw, fill=black!100, inner sep=1pt, minimum width=4pt,label=below:$y_{i+1}$] (p) at (28-4-4,-4) {};

	\node (q) at (26-4,0) {$\cdots$};
	\node (r) at (26-4,-4) {$\cdots$};
	
	\node[circle, draw, fill=black!100, inner sep=1pt, minimum width=4pt,label=above:$x_{k-1}$] (s) at (28-4,0) {};
	\node[circle, draw, fill=black!100, inner sep=1pt, minimum width=4pt,label=below:$y_{k-1}$] (t) at (28-4,-4) {};
	\node[circle, draw, fill=black!100, inner sep=1pt, minimum width=4pt,label=above:$x_k$] (u) at (32-4,0) {};
	\node[circle, draw, fill=black!100, inner sep=1pt, minimum width=4pt,label=below:$y_k$] (v) at (32-4,-4) {};
		
	\foreach \from/\to in {a/b,a/c,b/d,c/d,c/e,d/f,e/f,k/l,k/m,l/n,m/n,m/o,n/p,o/p,s/t,s/u,t/v,u/v}
	\draw (\from) -- (\to);
	\end{tikzpicture}
	\caption{A $k$-ladder.}\label{ladder}
\end{figure}

Observe that interval graphs are equivalent to $1$-ladder-free cocomparability graphs, and domino-free graphs are precisely $2$-ladder-free graphs. 
Therefore we believe that the study of $k$-ladder-free cocomparability graphs is a good strategy towards proving LexCycle=2 for cocomparability graphs. 

Recently there has been progress on other subclasses of cocomparability graphs:  in~\cite{GaoXu}, the authors showed that $\overline{P_2\cup P_3}$-free cocomparability graphs, and thus diamond-free cocomparability graphs as well as cocomparability graphs with girth 4, have  $\LexCycle = 2$. 

Outside cocomparability graphs, we wonder if the conjecture holds for split graphs as well. 
And, although we haven't been able to prove the conjecture for trees, we strongly believe that it holds for trees using BFS only -- the lack of cycles on trees implies that every LexBFS is a BFS ordering. 

\noindent\emph{A word on runtime for arbitrary cocomparability graphs}: 
Although the conjecture is still open for cocomparability graphs, experimentally we observed that the convergence often happens relatively quickly, but not always, as shown by the sequence of graphs $\{G_n\}_{n \ge 2}$ presented below. 
This graph family, experimentally, takes $O(n)$ LexBFS$^+$ sweeps before converging. 
We describe an example in the family in terms of its complement since it is easier to picture, and the LexBFS traversals of the complement are easier to parse. 
Let $G_n=(V = A \cup B, E)$ be a \emph{comparability} graph on $2n+2$ vertices, where both $A$ and $B$ are paths, i.e. $A = a_1, a_2, \ldots, a_n, B = x, y, b_1, b_2, \ldots, b_n$, and the only edges in $E$ are of the form $E = \{(a_ia_{i+1}) : i \in [n-1] \} \cup \{(xy), (yb_1)\} \cup \{(b_jb_{j+1}) : j \in [n-1]\}$. 
The initial comparability ordering is constructed by collecting the odd indexed vertices first, then the even indexed ones as follows: 
\begin{itemize}
	\item Initially we start $\tau$ with $x, a_1$.
	\item In general, if the last element in $\tau$ is $a_i$ and $i$ is odd, while $i$ is in a valid range, append $b_i, b_{i+2}, a_{i+2}$ to $\tau$ and repeat.
	\item If $n$ is even, append $b_n, a_n$ to $\tau$, otherwise append $a_{n-1}, b_{n-1}$ to $\tau$.
	\item Again while $i$ is in a valid range, we append the even indexed vertices $a_{i}, b_{i}, b_{i-2}, a_{i-2}$ to $\tau$.
	\item Append $y$ to $\tau$. 
\end{itemize}
The ordering $\tau$ as constructed is a transitive orientation of the graph, and thus is a cocomparability ordering in the complement. 
We perform a series of LexBFS$^+$ sweeps where $\sigma_1 = $ LexBFS$^+(\tau)$ in the complement, i.e. the cocomparability graph.  

Every subsequent $^+$ sweep will proceed to ``gather" the elements of $A$ close to each other, resulting in an ordering that once it moves to path $A$ remains in $A$ until all its elements have been visited. 
An intuitive way to see why this must happen is to notice in the complement, the vertices of $A$ are universal to $B$ and thus must have a strong pull. 
Experimentally, this 2-paths graph family takes $O(n)$ LexBFS$^+$ sweeps before converging. 
Figure \ref{g6} below is an example for $n = 6$. 
\begin{figure}[H]
	\begin{minipage}{.3\textwidth}
		\centering
		\begin{tikzpicture}[auto=left,every node/.style={circle,inner sep=1pt, minimum width=4pt,draw,fill=black!100}]
		\node[label=left:$a_6$] (a6) at (0,2.5) {};
		\node[label=left:$a_5$] (a5) at (0,1.5) {};
		\node[label=left:$a_4$] (a4) at (0,0.5) {};
		\node[label=left:$a_3$] (a3) at (0,-0.5) {};
		\node[label=left:$a_2$] (a2) at (0,-1.5) {};
		\node[label=left:$a_1$] (a1) at (0,-2.5) {};
		
		\node[label=left:$b_6$] (b6) at (1,2.5) {};
		\node[label=left:$b_5$] (b5) at (1,1.5) {};
		\node[label=left:$b_4$] (b4) at (1,0.5) {};
		\node[label=left:$b_3$] (b3) at (1,-0.5) {};
		\node[label=left:$b_2$] (b2) at (1,-1.5) {};
		\node[label=left:$b_1$] (b1) at (1,-2.5) {};
		\node[label=left:$y$] (y) at (1,-3.5) {};
		\node[label=left:$x$] (x) at (1,-4.5) {};
		
		\foreach \from/\to in {a1/a2, a2/a3, a3/a4, a4/a5, a5/a6, x/y, y/b1, b1/b2, b2/b3, b3/b4, b4/b5, b5/b6}
		\draw (\from) -- (\to);
		\end{tikzpicture}
	\end{minipage}
	\begin{minipage}{0pt}
		\centering
		\begin{align*}
		\tau & &= x, a_1, b_1, b_3, a_3, a_5, b_5, b_6, a_6, a_4, b_4, b_2, a_2, y\\
		\sigma_1 &= \text{LexBFS}^+(\tau) &= y, a_2, b_2, b_4, a_4, a_6, b_6, a_5, b_5, b_3, a_1, a_3, x, b_1\\
		\sigma_2 &= \text{LexBFS}^+(\sigma_1) &= b_1, x, a_1, a_3, b_3, b_5, a_5, a_6, b_6, b_4, a_4, a_2, b_2, y\\
		\sigma_3 &= \text{LexBFS}^+(\sigma_2) &= y, b_2, a_2, a_4, b_4, b_6, a_6, b_5, a_1, a_1, a_3, b_3, x, b_1\\
		\sigma_4 &= \text{LexBFS}^+(\sigma_3) &= b_1, x, b_3, a_3, a_1, a_5, b_5, b_6, a_6, a_4, a_2, b_4, b_2, y\\
		\sigma_5 &= \text{LexBFS}^+(\sigma_4) &= y, b_2, b_4, a_2, a_4, a_6, b_6, a_5, a_1, a_3, b_5, b_3, x, b_1\\
		\sigma_6 &= \text{LexBFS}^+(\sigma_5) &= b_1, x, b_3, b_5, a_3, a_1, a_5, a_6, a_4, a_2, b_6, b_4, b_2, y\\
		\sigma_7 &= \text{LexBFS}^+(\sigma_6) &= y, b_2, b_4, b_6, a_2, a_4, a_6, a_5, a_1, a_3, b_5, b_3, x, b_1\\
		\sigma_8 &= \text{LexBFS}^+(\sigma_7) &= b_1, x, b_3, b_5, a_3, a_1, a_5, a_6, a_4, a_2, b_6, b_4, b_2, y &= \sigma_6
		\end{align*}
	\end{minipage}
	\caption{$G_6$, A comparability graph; $\tau$ a cocomparability ordering of the complement of $G_6$ and a series of LexBFS$^+$ of the corresponding cocomparability graph. }\label{g6}
\end{figure}

\noindent \emph{Other Graph Searches:} 
One could raise a similar cycle question for different graph searches; in particular, \emph{Lexicographic Depth First Search} (LexDFS). 
LexDFS was introduced in \cite{CK08} and is a graph search that extends DFS is a similar way to how LexBFS extends BFS - see Algorithm \ref{lexdfsalg}.
\begin{algorithm}[H]
	\caption{LexDFS}\label{lexdfsalg}
	\begin{algorithmic}[1]
		\Input A graph $G=(V, E)$ and a start vertex $s$
		\Output An ordering $\sigma$ of $V$
		\State assign the label $\epsilon$ to all vertices, and $label(s) \leftarrow \{0\}$
		\For{$i \leftarrow 1$ to $n$}
		\State pick an unnumbered vertex $v$ with lexicographically largest label
		\State $\sigma(v) \leftarrow i$ \Comment{$v$ is assigned the number $i$}
		\ForAll{unnumbered vertex $w$ adjacent to $v$}
		\State prepend $i$ to $label(w)$
		\EndFor
		\EndFor
	\end{algorithmic}
\end{algorithm}
LexDFS has led to a number of linear time algorithms on cocomparability graphs, including maximum independent set and Hamilton path \cite{corneil2013ldfs, BigArt, kohler2014linear}. 
In fact, these recent results have shown just how powerful combining LexDFS and cocomparability orderings is. 
It is therefore natural to ask whether a sequence of LexDFS orderings on cocomparability graphs reaches a cycle with nice properties. 
Unfortunately, this is not the case as shown by the example in Figure~\ref{ldfscexp}, where $G$ is a cocomparability graph as witnessed by the following cocomparability ordering $\tau = a, c, e, f, g, d, b$, however doing a sequence of LexDFS$^+$ on $G$ cycles before we reach a cocomparability ordering, and the cycle has size four. \\
\begin{figure}
	\begin{minipage}{.3\textwidth}
		\centering
		\begin{tikzpicture}[auto=left,every node/.style={circle,inner sep=1pt, minimum width=4pt,draw,fill=black!100}]
		\node[label=above:$g$] (g) at (0,2) {};
		\node[label=left:$e$] (e) at (-1,1) {};
		\node[label=right:$f$] (f) at (1,1) {};
		\node[label=below:$c$] (c) at (-0.8,0) {};
		\node[label=below:$a$] (a) at (-1.5,0) {};
		\node[label=below:$d$] (d) at (0.8,0) {};
		\node[label=below:$b$] (b) at (1.5,0) {};
		\foreach \from/\to in {g/e, g/f, g/c, e/f, e/c, f/d, c/d, c/a, d/b}
		\draw (\from) -- (\to);
		\end{tikzpicture}
	\end{minipage}
	\begin{minipage}{0pt}
		\centering
		\begin{align*}
		\sigma_1 &= \text{LexDFS}(G) &= a, c, d, b, f, g, e\\
		\sigma_2 &= \text{LexDFS}^+(\sigma_1) &= e, g, f, d, c, a, b\\
		\sigma_3 &= \text{LexDFS}^+(\sigma_2) &= b, d, c, a, g, e, f\\
		\sigma_4 &= \text{LexDFS}^+(\sigma_3) &= f, e, g, c, d, b, a\\
		\sigma_5 &= \text{LexDFS}^+(\sigma_4) &= a, c, d, b, f, g, e &= \sigma_1
		\end{align*}
	\end{minipage}
	\caption{A sequence of LexDFS$^+$ orderings on a cocomparability graph, that cycles after 5 iterations, and none of the orderings is a cocomparability order.}\label{ldfscexp}
\end{figure}

\noindent \textbf{Acknowledgements:} The authors wish to warmly thank the anonymous referees for their very helpful comments that helped us correct flaws in our proofs. They also  thank  J\'er\'emie Dusart and Derek Corneil for helpful discussions on this subject.

\end{document}